  \newcommand{\beq}{\begin{equation}}
  \newcommand{\eeq}{\end{equation}}
\newcommand{\argmax}{\operatornamewithlimits{arg\,max}}
\newcommand{\argmin}{\operatornamewithlimits{arg\,min}}
\newcommand{\model}{\theta}
\newcommand{\Model}{\Theta}
\newcommand{\mle}{\theta^*}
\newcommand{\modeldim}{M}
\newcommand{\horizon}{N}
\newcommand{\reals}{{\rm I\hspace{-.07cm}R}}
\newcommand{\state}{x}
\newcommand{\obs}{y}
\newcommand{\anoise}{\epsilon}
\newcommand{\belief}{\pi}
\newcommand{\bbelief}{\bar{\belief}}
\newcommand{\dbelief}{\Delta}
\newcommand{\priv}{\eta}
\newcommand{\Belief}{\Pi}
\newcommand{\post}{\alpha}
\newcommand{\obspace}{\mathcal{Y}}
\newcommand{\actionspace}{\mathcal{A}}
\newcommand{\fun}{\phi}
\newcommand{\levels}{g}
\newcommand{\action}{a}
\newcommand{\laction}{u}
\newcommand{\actiondim}{A}
\newcommand{\eaction}{\bar{a}}
\newcommand{\hstate}{\hat{\state}}
\newcommand{\actions}{\{1,\ldots,A\}}
\newcommand{\obss}{\{1,\ldots,\obsdim\}}
\newcommand{\anoisecov}{\sigma^2_\anoise}
\newcommand{\Sig}{K}
\newcommand{\cost}{c}
\newcommand{\tla}{\tilde{\laction}}
\newcommand{\ones}{\mathbf{1}}
\newcommand{\prob}{\mathbb{P}}
\newcommand{\E}{\mathbb{E}}
\newcommand{\lik}{\mathcal{L}}
\newcommand{\nablam}{\nabla_\model}
\newcommand{\unpost}{q}
\newcommand{\unpostm}{\unpost^\model}
\newcommand{\ole}{\stackrel{\text{defn}}{=}}
\newcommand{\kg}{\psi}
\newcommand{\pushright}[1]{\ifmeasuring@#1\else\omit\hfill$\displaystyle#1$\fi\ignorespaces}
\newcommand{\rate}{\lambda}
\newcommand{\rateo}{\rate_0}
\newcommand{\quantBelief}{\Belief^Q}
\newcommand{\filterd}{\sigma}
\newcommand{\statespace}{\mathcal{X}}
\newcommand{\statedim}{X}
\newcommand{\obsdim}{Y}
\newcommand{\states}{\{1,2,\dots,\statedim\}}
\newcommand{\stateswi}{\{2,\dots,\statedim\}}
\newcommand{\imp}{q}
\newcommand{\oprob}{B}
\newcommand{\aoprob}{R}
\newcommand{\aprob}{G}
\newcommand{\tp}{P}
\newcommand{\pdf}{p}
\newcommand{\statem}{A}
\newcommand{\snoise}{w}
\newcommand{\onoise}{v}
\newcommand{\obsm}{C}
\newcommand{\obsmt}{C^o}
\newcommand{\hobsm}{\obsm}
\newcommand{\snoisecov}{Q}
\newcommand{\onoisecov}{R}
\newcommand{\kalmancov}{\Sigma}
\newcommand{\normal}{N}
\newcommand{\p}{\prime}
\newcommand{\Ga}{\operatorname{Ga}}
\newcommand{\filter}{T}
\newcommand{\nparticles}{N}
\newcommand{\iter}{{(I)}}
\newcommand{\iterI}{I}
\newcommand{\iterplus}{{(I+1)}}
\newtheorem{theorem}            {Theorem}
\newtheorem{definition}         [theorem]{Definition}
\newenvironment{thm}{\begin{theorem}%
  \pushQED{\qed}}%
  {\popQED\end{theorem}}
\newcommand{\weight}{w}
\newcommand{\finaltime}{\horizon}
\newcommand{\innovations}{\iota}
\newcommand{\enemystate}{\hat{\hat{\state}}}
\newcommand{\enemystatem}{\bar{\statem}}
\newcommand{\enemyobsm}{\bar{\obsm}}
\newcommand{\enemykalmancov}{\bar{\kalmancov}}
\newcommand{\enemySig}{\bar{\Sig}}
\newcommand{\enemyonoisecov}{\bar{\onoisecov}}
\newcommand{\enemysnoisecov}{\bar{\snoisecov}}
\newcommand{\enemyinputm}{\bar{F}}
\newcommand{\enemykg}{\bar{\kg}}
\newcommand{\SNR}{\operatorname{SNR}}
\newcommand{\enemySNR}{\overline{\SNR}}
\newcommand{\eobs}{z}
\newcommand{\precision}{\mathcal{P}}
\newcommand{\lR}{\preceq}
\newcommand{\dtime}{n}
\newcommand{\parn}{b}
\newcommand{\mean}{\mu}
\newcommand{\fast}{t}
\newcommand{\fastT}{T}
\newcommand{\copomat}{L}
\newcommand{\var}{\operatorname{Var}}
\newcommand{\tpopt}{\tp^*}
\newcommand{\param}{\theta}
\newcommand{\nablap}{\nabla_\param}
\newcommand{\objective}{J}
\newcommand{\direction}{d}
\newcommand{\tpi}{\tp(1)}
\newcommand{\tpii}{\tp(2)}
\newcommand{\belieftpi}{\belief^{\tpi}}
\newcommand{\belieftpii}{\belief^{\tpii}}
\newcommand{\lr}{\leq_{lr}}
\newcommand{\gr}{\geq_{lr}}
\newcommand{\F}{\mathcal{F}}
\begin{document}

\title{How to Calibrate your Adversary's Capabilities? Inverse Filtering for Counter-Autonomous Systems}

\author{Vikram~Krishnamurthy, {\em Fellow IEEE}  and Muralidhar Rangaswamy,
  {\em Fellow IEEE}\\
  Manuscript dated \today
  \thanks{Vikram Krishnamurthy is  with Cornell Tech and the School of Electrical and Computer Engineering, Cornell University.
    Email: vikramk@cornell.edu.  Muralidhar Rangswamy is with Air Force Research Labs, Dayton, Ohio. Email: muralidhar.rangaswamy@us.af.mil

  This research was
funded by  in part by the Airforce Office of Scientific Research grant FA9550-18-1-0007,
and in part by  the   Army Research Office
research grant 12346080.

A short version of this paper  appeared in the Proceedings of  the International Conference  on Information Fusion, 2019.}}

\maketitle

\begin{abstract}
  We consider an adversarial Bayesian signal processing problem involving ``us'' and an ``adversary''. The adversary observes our  state in noise; updates its posterior distribution of the state and then chooses an action based on this posterior. Given knowledge of ``our'' state and sequence of  adversary's actions observed in noise, we consider three problems: (i) How can the adversary's posterior distribution be estimated?   Estimating the posterior is an inverse filtering problem involving a random measure - we formulate and solve several versions of this problem in a Bayesian setting.
  (ii) How can the adversary's observation likelihood  be estimated?  This tells us how accurate the adversary's sensors are.
 We compute the  maximum likelihood estimator for the adversary's observation likelihood given our measurements of the adversary's actions where the adversary's actions are in response to estimating our state.
(iii) How can the state be chosen by us  to minimize the covariance of the estimate of the adversary's observation likelihood? ``Our'' state can be viewed as a probe signal which causes the adversary to act; so choosing the optimal state sequence is an input design problem.
The above questions are motivated by the design of counter-autonomous systems: given measurements of  the actions of a sophisticated autonomous adversary, how can our counter-autonomous system  estimate the underlying belief of the adversary,  predict future actions and therefore guard against these actions.

\end{abstract}

\begin{IEEEkeywords}
inverse filtering, adversarial signal processing, remote calibration, maximum likelihood, counter-autonomous systems, random measure, optimal probing, stochastic dominance
\end{IEEEkeywords}

\IEEEpeerreviewmaketitle

\section{Introduction}

\IEEEPARstart{B}{ayesian} filtering  maps a sequence of noisy observations  to a sequence of posterior distributions of the underlying state.
This paper  considers the inverse problem of reconstructing the posterior given the state and noisy measurements of the posterior; and also estimating the observation likelihood parameter and designing the state signal.
That is we wish to construct an optimal filter to estimate the adversary's optimal filtered estimate, given noisy measurements of the adversary's action.
Such problems arise in  adversarial signal processing  applications involving counter-autonomous systems:  the adversary has a sophisticated autonomous sensing system; given measurements of the adversary's actions we wish to estimate the adversary's sensor's capabilities and predict its future actions (and therefore guard against these actions).

\subsection{Problem Formulation}

The problem formulation involves two players;  we refer to the two players as ``us'' and ``adversary''.  With $k=1,2,\ldots$ denoting discrete time, the model has
the following dynamics:
\begin{equation}
\begin{split}
    \state_k &\sim  \tp_{\state_{k-1},\state} = \pdf(\state | \state_{k-1}), \quad \state_0 \sim \belief_0 \\
    \obs_k  &\sim \oprob_{\state_k,\obs} = \pdf(y | x_k)\\
    \belief_k &= \filter(\belief_{k-1}, \obs_k)\\
    \action_k &\sim \aprob_{\belief_k,\action} = \pdf(\action | \belief_k)
  \end{split} \label{eq:model}
\end{equation}
Let us explain the notation in (\ref{eq:model}):
 $\pdf(\cdot)$ denotes a generic conditional  probability density function (or probability mass function),  $\sim$ denotes distributed according to, and
\begin{compactitem}
    \item $\state_k\in \statespace$ is our Markovian state with  transition kernel $\tp_{\state_{k-1},\state}$ and prior $\belief_0$ where $\statespace$ denotes the state space.
    \item $\obs_k\in \obspace$ is the adversary's noisy observation of our state $\state_k$; with observation likelihoods $\oprob_{xy}$. Here $\obspace$ denote the observation space.
    \item $\belief_k = \pdf(x_k| \obs_{1:k})$ is the adversary's belief (posterior)  of our state $\state_k$ where $\obs_{1:k}$ denotes the sequence  $\obs_1,\ldots,\obs_k$. The operator $T$ in (\ref{eq:model}) is the classical Bayesian filter  \cite{Kri16}
       \beq  \filter(\belief,\obs) = \frac{
    \oprob_{\state ,\obs} \int_\statespace 
    \tp_{\zeta, \state}\, \belief(\zeta) \,d\zeta}
  {\int_\statespace  \oprob_{\state ,\obs} \int_\statespace 
    \tp_{\zeta, \state}\, \belief(\zeta)\, d\zeta d\state}
  \label{eq:belief}
\eeq
Let $\Belief$ denote the  space of all such beliefs. When the state space
$\statespace$ is Euclidean space, then $\Belief$ is a function space comprising the space of density functions; if $\statespace$ is finite, then $\Belief$ is  the unit $
\statedim-1$ dimensional simplex of $\statedim$-dimensional probability mass functions.
    \item $\action_k \in \actionspace$ denotes our measurement of the adversary's action based on its current belief\footnote{More explicitly, the adversary chooses an action $\laction_k$ as a deterministic function of $\belief_k$ and we observe $\laction_k$ in noise as $\action_k$. We encode this as $\aprob_{\belief_k,\action_k}$.} $\belief_k$ where $\actionspace$ denotes the action space. Thus $\aprob$ is the conditional probability (or density if $\actionspace$ is continuum) of observing an action $\action$ given the adversary's belief $\belief$.
    \end{compactitem}

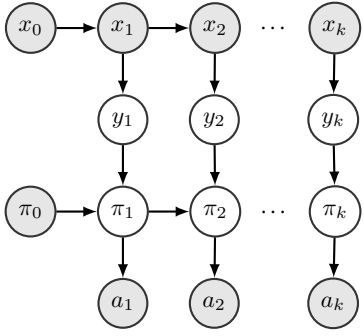
\begin{figure}[h]
\centering
\begin{tikzpicture}[thick,scale=0.9, every node/.style={transform shape}]
\tikzstyle{main}=[circle, minimum size = 5mm, thick, draw =black!80, node distance = 6mm]
\tikzstyle{connect}=[-latex, thick]
  \node[main, fill = black!10] (x0) {$x_0$};
  \node[main, fill = black!10] (x1) [right=of x0] {$x_1$};
  \node[main, fill = black!10] (x2) [right=of x1] {$x_2$};
  \node[main, fill = black!10] (xk) [right=10mm of x2] {$x_k$};
  \path (x0) edge [connect] (x1)
        (x1) edge [connect] (x2)
        (x2) -- node[auto=false]{\ldots} (xk);
  
  \node[main] (y1) [below=of x1] {$y_1$ };
  \node[main] (y2) [right=of y1] {$y_2$};
  \node[main] (yk) [right=10mm of y2] {$y_k$};
  \path (x1) edge [connect] (y1)
        (x2) edge [connect] (y2)
        (xk) edge [connect] (yk);
  
  \node[main] (pi1) [below=of y1] {$\pi_1$ };
  \node[main, fill = black!10] (pi0) [left=of pi1]{$\pi_0$};
  \node[main] (pi2) [right=of pi1] {$\pi_2$};
  \node[main] (pik) [right=10mm of pi2] {$\pi_k$};
  \path (pi0) edge [connect] (pi1)
        (pi1) edge [connect] (pi2)
        (pi2) -- node[auto=false]{\ldots} (pik)
    
        (y1) edge [connect] (pi1)
        (y2) edge [connect] (pi2)
        (yk) edge [connect] (pik);

  \node[main, fill = black!10] (a1) [below=of pi1]{$a_1$};
  \node[main, fill = black!10] (a2) [right=of a1] {$a_2$};
  \node[main, fill = black!10] (ak) [right=10mm of a2] {$a_k$};
  \path (pi1) edge [connect] (a1)
        (pi2) edge [connect] (a2)
        (pik) edge [connect] (ak);
\end{tikzpicture}
\caption{Graphical representation of inverse filtering problem. The shaded nodes are known to us, while the white nodes are unknown. The model is described in (\ref{eq:model}).}
\label{fig:graph}
\end{figure}
    
Figure \ref{fig:graph} is a graphical representation of the model (\ref{eq:model}). The shaded nodes are known to us, while the white nodes are computed by the adversary and unknown to us. This is in contrast to classical Bayesian filtering where $\obs_k$ is known to us and $\state_k$ is to be estimated, i.e, $\belief_k$ is to be computed.

\subsection{Objectives}
Given (\ref{eq:model}), this paper  considers  the following  problems:
\begin{compactenum}
    \item \textbf{Inverse Filtering}: How to estimate the adversary's belief given measurements of its actions (which are based on  its filtered estimate of our state)? Assuming  probability distributions  $\tp,\oprob,\aprob$ are  known, estimate  the adversary's belief $\belief_k$ at each time $k$, by computing   posterior $\pdf(\belief_k\mid \belief_0,\state_{0:k},\action_{1:k})$. From a practical point of view, estimating the adversary's belief allows us to  predict (in a Bayesian sense)  future actions of the adversary. In the design of counter-autonomous systems, this facilitates taking effective measures against such actions.

    \item \textbf{Covariance Bounds in Localization}:  Localization refers to estimating the underlying state $\state$ when  the transition kernel is identity; so  state $\state$ is a random variable.
      We consider the case where both  us and the adversary do not know  $\state$. For the Gaussian case, we consider  a sequential game setting involving us and the adversary estimating the underlying state. We show that the asymptotic covariance is twice as large as that of classical localization.

          \item \textbf{Parameter Estimation}: How to remotely calibrate the adversary's sensor? Assume  $\tp,\aprob$ are known and  $\oprob$ is parametrized  by vector $\theta$. Compute the maximum likelihood estimate (MLE)  of $\theta$ given  $\state_{0:\horizon},\action_{1:\horizon}$ for fixed data length $\horizon$.
            From a practical point of view, this determines the accuracy of the adversary's sensor.
    \item \textbf{Optimal Probing}. The transition kernel
      $\tp$ determines our random signal  $\state$ which  ``probes'' the adversary's sensor. What choice of $\tp$ yields the smallest variance in our estimate of the adversary's likelihood $\oprob$?
\end{compactenum}

\subsection{Examples} \label{sec:examples}
 1. {\em Calibrating a Radar}.
Here ``us'' refers to a drone/UAV  or electromagnetic signal that probes an adversary's sophisticated multi-function  radar system.
The adversary's radar records measurements $\obs_k$ of the drones  kinematic state $\state_k$ and its Bayesian tracking functionality  computes $\belief_k$. The radar resource manager then chooses an action $\laction_k$
(e.g.\ waveform, beam orientation/aperture) and our inference/measurement of this action is
$\action_k$. The above  problems then boil down to: estimating the adversary's tracked estimate so as to predict its future actions (inverse filtering); determining the accuracy of the adversary's radar system (parameter estimation), optimizing the drone's trajectory  to estimate $\oprob$ accurately (optimal probing).
\\
{\em 2. Electronic Warfare}. Suppose $\state_k$ is our electromagnetic signal applied to jam the adversary's radar (e.g.\ a cross polarized signal where the phase evolves  according to a Markov process). The radar responds with various actions. Our aim is to determine the effectiveness of the jamming signal; $\oprob$ models the likelihood of the radar recognizing the jamming signal, $\belief_k$ denotes the radar's belief of whether it is being jammed; and $\action_k$ denotes the action we observe from the radar.

{\em 3. Interactive Learning}. Although we do not pursue it here, the above model has strong parallels with those used in personalized education and interactive learning:  $\state_k$ denotes the quality/difficulty of material being taught; $\oprob$ models how the learner absorbs the material; $\belief_k$ models the learner's knowledge of the material and $\action_k$ is the response to an exam administered by the instructor. The instructor aims to  estimate the learner's knowledge and  the learner's  absorption probabilities to optimize how to present the material.

{\em Simplifications}:
This  paper makes several  simplifying  assumptions. First,
we assume the adversary knows our transition kernel $\tp$ to compute  the Bayesian belief using  (\ref{eq:belief}).
In reality the adversary  needs to estimate $\tp$  based on our trajectory; and we  need to estimate the adversary's estimate of $\tp$. 
Second, we assume that the adversary's sensors are not reactive - that is, it does not know that we are probing it. If the adversary is a reactive  low probability of intercept (LPI) radar, it  will attempt to confuse our estimator resulting in a dynamical  game. Both these simplifying  assumptions point to a  more general game-theoretic  setting; which is the subject of future work.

\subsection{Related Works}
Counter unmanned autonomous systems are discussed in \cite{Kup17}.
Specifically, \cite[Figure 1]{Kup17} places such systems at a level of abstraction above the physical sensors/actuators/weapons and datalink layers; and below the human controller layer.

This  paper extends our recent works \cite{MRKW17,MRKW18,MIC19} where the mapping from
belief $\belief$ to action $\action$ was assumed  deterministic. Specifically, \cite{MRKW17} gives a deterministic regression based approach to  estimate the adversary's model parameters in a Hidden Markov model. In comparison, the current paper assumes a probabilistic map between $\belief $ and $\action$ and develops Bayesian filtering algorithms for estimating  the posterior along with MLE algorithms for $\model$. There strong parallels between inverse filtering and Bayesian social learning  \cite{Cha04} as described in  Sec.\ref{sec:social}; the key difference is that in social learning the aim is to estimate the underlying state given noisy posteriors, whereas our aim is to estimate the posterior given noisy posteriors and the underlying state.

\section{Optimal Inverse Filter for Estimating Belief}
Given the model  (\ref{eq:model}) we now derive a filtering recursion for the posterior of the adversary's belief given knowledge  of our state sequence and recorded actions. Accordingly, define
$$\post_{k}(\belief_k) = \pdf(\belief_k | \action_{1:k},\state_{0:k}).$$
Note that the posterior $\post_k(\cdot)$ is a {\em random measure} since it is a posterior distribution of the adversary's posterior  distribution (belief) $\belief_k$.
\begin{thm} \label{thm:post}
The posterior  $\post_k$ satisfies the following filtering recursion initialized by prior random measure $\alpha_0 = \pdf(\belief_0)$:
\beq
  \post_{k+1}(\belief) = \frac{\aprob_{\belief,\action_{k+1}}
    \,  \int_\Belief \oprob_{\state_{k+1}, \obs_{\belief_k,\belief}}\, \post_k(\belief_k) d\belief_k}
  {\int_\Belief \aprob_{\belief,\action_{k+1}}
    \,  \int_\Belief \oprob_{\state_{k+1}, \obs_{\belief_k,\belief}}\, \post_k(\belief_k) d\belief_k \, d\belief}
  \label{eq:post}
  \eeq
 Here $\obs_{\belief_k,\belief}$ is the observation such that $ \belief = \filter(\belief_k,\obs)$ where $\filter$ is the adversary's  filter (\ref{eq:belief}). The conditional mean estimate of the belief is $\E\{\belief_{k+1}|\action_{1:k},\state_{0:k}\} = \int_\Belief \belief \post_{k+1}(\belief) d\belief $.
\end{thm}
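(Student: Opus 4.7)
The plan is to derive (\ref{eq:post}) by three standard moves: (i) apply Bayes' rule to peel off the newest action $\action_{k+1}$, (ii) marginalize over the previous belief $\belief_k$ via a Chapman--Kolmogorov step, and (iii) evaluate the one-step belief-transition kernel $\pdf(\belief_{k+1} \mid \belief_k, \state_{k+1})$ using the deterministic Bayes update $\filter$ together with the observation likelihood $\oprob$.

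For (i), I would write $\post_{k+1}(\belief) \propto \pdf(\action_{k+1} \mid \belief_{k+1}=\belief, \action_{1:k}, \state_{0:k+1}) \, \pdf(\belief_{k+1}=\belief \mid \action_{1:k}, \state_{0:k+1})$. Since $\action_{k+1}$ has $\belief_{k+1}$ as its only parent in Fig.~\ref{fig:graph}, the first factor collapses to $\aprob_{\belief, \action_{k+1}}$, which matches the prefactor in the numerator of (\ref{eq:post}). For (ii), I would expand the remaining predictor as $\int_\Belief \pdf(\belief_{k+1}=\belief \mid \belief_k, \state_{k+1}) \, \pdf(\belief_k \mid \action_{1:k}, \state_{0:k+1}) \, d\belief_k$; because $\belief_k$ is $d$-separated from $\state_{k+1}$ given $(\action_{1:k}, \state_{0:k})$ (the only path goes through $\state_k$, which is in the conditioning set), the second factor equals $\post_k(\belief_k)$.

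Step (iii) is the crux and where I expect the main subtlety to live. Given $(\belief_k, \state_{k+1})$, the only residual randomness in $\belief_{k+1}$ enters through $\obs_{k+1} \sim \oprob_{\state_{k+1}, \cdot}$ followed by the \emph{deterministic} map $\obs \mapsto \filter(\belief_k, \obs)$ defined by (\ref{eq:belief}). Assuming this map is locally invertible so that a unique $\obs_{\belief_k, \belief}$ satisfying $\belief = \filter(\belief_k, \obs_{\belief_k, \belief})$ is well defined, pushing the measure $\oprob_{\state_{k+1}, \cdot}$ forward through $\filter(\belief_k, \cdot)$ yields $\pdf(\belief_{k+1}=\belief \mid \belief_k, \state_{k+1}) = \oprob_{\state_{k+1}, \obs_{\belief_k, \belief}}$; in the continuous-observation case, any Jacobian of $\filter(\belief_k,\cdot)^{-1}$ is tacitly absorbed into this notation, which I view as the one genuinely non-routine point of the argument. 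Substituting (i)--(iii) together and normalizing by $\int_\Belief (\cdot) \, d\belief$ produces (\ref{eq:post}); the conditional-mean formula $\E\{\belief_{k+1} \mid \action_{1:k}, \state_{0:k}\} = \int_\Belief \belief \, \post_{k+1}(\belief) \, d\belief$ is then immediate from the definition of $\post_{k+1}$ as the posterior of $\belief_{k+1}$.
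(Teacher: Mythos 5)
Your proposal is correct and follows essentially the same route as the paper's proof: the paper writes the un-normalized joint density $\pdf(\belief_{k+1},\obs_{k+1},\action_{1:k+1},\state_{0:k+1})$, inserts the Dirac mass $\pdf(\belief_{k+1}\mid\obs_{k+1},\belief_k)=I(\belief_{k+1}-\filter(\belief_k,\obs_{k+1}))$, and marginalizes over $\obs_{k+1}$ before normalizing, which is exactly your Bayes / Chapman--Kolmogorov / push-forward decomposition carried out in un-normalized form. Your remark that a Jacobian of $\obs\mapsto\filter(\belief_k,\obs)$ is tacitly absorbed into the notation $\oprob_{\state_{k+1},\obs_{\belief_k,\belief}}$ is apt --- the paper's marginalization against the Dirac mass carries the same implicit change-of-variables factor in the continuous-observation case, and is exact as written only when $\obspace$ is discrete.
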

\begin{proof}
Start with the un-normalized density:
\begin{multline*}
  \pdf(\belief_{k+1},\obs_{k+1},\action_{1:k+1}, \state_{0:k+1})
  = \pdf(\action_{k+1}|\belief_{k+1})\, \pdf(\obs_{k+1}|\state_{k+1})\, \\
  \times \pdf(\state_{k+1}|\state_k)\, \int_\Belief \pdf(\belief_{k+1}| \obs_{k+1},\belief_k)\,  \, \pdf(\belief_k,\action_{1:k},\state_{0:k})\, d\belief_k 
\end{multline*}
Note that $\pdf(\belief_{k+1}|\obs_{k+1},\belief_k) = I(\belief_{k+1} - \filter(\belief_k, \obs_{k+1}))$ where $I(\cdot)$ denotes  the indicator function.
Finally,  marginalizing the above  by integrating over $\obs_{k+1}$ and then normalizing yields (\ref{eq:post}). The $\pdf(\state_{k+1}|\state_k)$ term cancels out after normalization.

\end{proof}

  We call (\ref{eq:post})  the {\em optimal inverse filter} since it yields the Bayesian posterior of the adversary's  belief given our state and noisy measurements of the  adversary's actions.
At first sight, it appears that $\post_{k+1}$ does not depend on the transition kernel $\tp$. Actually $\post_{k+1}$   does depend on $\tp$ since  from (\ref{eq:belief}),  $\obs_{\belief_k,\belief_{k+1}}$ depends on $\tp$.

The rest of this section considers  inverse filtering algorithms  for evaluating $\post_k(\belief)$ for the following 5 cases:
\begin{compactenum}
\item  Inverse Hidden Markov filter (finite dimensional but exponential computational cost in time)
\item Inverse Kalman filter (finite dimensional sequence of Kalman filters)
\item Sequential MCMC for inverse filtering; the structure faciliates  using the so called ``optimal importance'' function.
\item Inverse filtering in multi-agent social learning
\item Inverse  Bayesian localization involving conjugate priors (Gaussian likelihood and 
 Gamma prior)
\end{compactenum}

\subsection{Inverse Hidden Markov Model (HMM) Filter}
\label{sec:inversehmm}
 We first illustrate the inverse filter (\ref{eq:post}) for the case where the adversary deploys a HMM filter.
  Suppose $\statespace = \states$,
  $\obspace = \obss$,  $\actionspace = \actions$, implying that our state $\{\state_k, k \geq 0\}$ is a finite state Markov chain with transition matrix $\tp$ and the adversary has  finite valued HMM observations $\{\obs_k,k\geq 0\}$ with  observation probability $\oprob$.
Then $\filter(\belief,\obs)$ in (\ref{eq:belief}) is the classical HMM filter and the belief space $\Belief$ is the unit $\statedim-1$ dimensional simplex, i.e., the space of $\statedim$-dimensional probability vectors.
Suppose (\ref{eq:post}) is initialized  with $\alpha_0(\belief) = \delta(\belief- \belief_0)$, i.e., the prior is a Dirac-delta function placed at $\belief_0 \in \Belief$.

For $k=1,2,\ldots$ construct the  finite sets $\Belief_k$ of belief states via the following recursion:
$$\Belief_k = \big\{ \filter(\belief,\obs), \;\obs \in \obspace, \belief \in \Belief_{k-1} \big\}, \quad  \Belief_0 = \{\belief_0\}.$$
 Note  $\Belief_k$ has
$\obsdim^k$ elements.

Using (\ref{eq:post}), the inverse HMM filter for estimating the belief reads: for $\belief \in \Belief_{k+1}$, the posterior and conditional mean are
\begin{align}
 & \post_{k+1}(\belief) = \frac{\aprob_{\belief,\action_{k+1}}
    \,  \sum_{\bbelief \in \Belief_k} \oprob_{\state_{k+1}, \obs_{\bbelief,\belief}}\, \post_k(\bbelief) }
  {\sum_{\belief \in \Belief_{k+1}} \aprob_{\belief,\action_{k+1}}
    \,  \sum_{\bbelief \in \Belief_k} \oprob_{\state_{k+1}, \obs_{\bbelief,\belief}}\, \post_k(\bbelief) }
\nonumber  \\
 & \E\{\belief_{k+1}|\action_{1:k+1},\state_{0:k+1}\} = \sum_{\belief \in \Belief_{k+1}} \belief \post_{k+1}(\belief)
 \label{eq:hmmpost}
  \end{align}
  The inverse HMM filter (\ref{eq:hmmpost})  is a finite dimensional recursion, but the cardinality  of 
 $\Belief_k$ grows exponentially with $k$.


\subsection{Inverse Kalman Filter} \label{sec:inversekalman}
We consider  a second special case of (\ref{eq:post}) where the  inverse filtering problem  admits a finite dimensional characterization in terms of  the Kalman filter.
Consider a  linear Gaussian state space model
\beq \label{eq:lineargaussian}
\begin{split}
\state_{k+1} &= \statem\, \state_k  + \snoise_k, \quad \state_0 \sim \belief_0 \\
\obs_k &= \obsm\, \state_k + \onoise_k 
\end{split}
\eeq
where  $\state_k \in \statespace = \reals^\statedim$ is ``our'' state with
initial density $\belief_0 \sim \normal(\hat{\state}_0,\kalmancov_0)$,
 $\obs_k \in \obspace = \reals^\obsdim$ denotes the adversary's observations,
 $\snoise_k\sim \normal(0,\snoisecov_k)$,
 $\onoise_k \sim \normal(0,
\onoisecov_k)$
and 
  $\{\snoise_k\}$,  
  $\{\onoise_k\}$ are mutually independent  i.i.d.\ processes.

 Based on observations $\obs_{1:k}$, the adversary computes the  belief  $\belief_k = \normal(\hstate_k,\kalmancov_k)$ where $\hstate_k$ is the conditional mean
  state   estimate and $\kalmancov_k$ is the covariance; these are computed via the classical Kalman filter
  equations:\footnote{For localization problems, we will use the information filter form:
    \beq  \kalmancov^{-1}_{k+1} = \kalmancov_{k+1|k}^{-1} + \obsm^\p \onoisecov^{-1} \obsm, \quad
\kg_{k+1} = \kalmancov_{k+1} \obsm^\p \onoisecov^{-1} \label{eq:info}
\eeq Similarly, the inverse Kalman filter in information form reads
\beq \enemykalmancov^{-1}_{k+1} = \enemykalmancov^{-1}_{k+1|k} +
\enemyobsm_{k+1}^\p \enemyonoisecov^{-1} \enemyobsm_{k+1},\;
\enemykg_{k+1} = \enemykalmancov_{k+1} \enemyobsm_{k+1}^\p \enemyonoisecov^{-1}. \label{eq:enemyinfo}
\eeq}
\beq
  \begin{split}
\kalmancov_{k+1|k} &=  \statem  \kalmancov_{k} \statem^\p  +  \snoisecov  \\
\Sig_{k+1} &= \obsm \kalmancov_{k+1|k} \obsm^\p + \onoisecov 
\\
{\hstate}_{k+1} &=  \statem\,  {\hstate}_k  + 
\kalmancov_{k+1|k} \obsm^{\p}  \Sig_{k+1}^{-1} 
(\obs_{k+1} - \obsm \, \statem\,  {\hstate}_k )
\\
\kalmancov_{k+1} &=
\kalmancov_{k+1|k} -  
\kalmancov_{k+1|k} \obsm^{\p}  \Sig_{k+1}^{-1} 
\obsm \kalmancov_{k+1|k} 
\end{split}
\label{eq:kalman}
\eeq
  The 
  adversary then chooses its  action as  $\eaction_k = \fun(\kalmancov_k)\,\hstate_k$ for some pre-specified function\footnote{This choice is motivated by linear quadratic Gaussian control where the action (control) is chosen as a linear function of the estimated state $\hstate_k$ weighed by the covariance matrix}
  $\fun$. We  measure the adversary's  action as
\beq \action_k = \fun(\kalmancov_k)\,\hstate_k + \anoise_k, \quad
\anoise_k \sim \text{ iid } \normal(0,\anoisecov) \label{eq:linearaction} \eeq

The Kalman covariance $\kalmancov_k$ is deterministic and fully determined by the model parameters. So  to estimate the belief
$\belief_k=\normal(\hstate_k,\kalmancov_k)$ we only need  to estimate $\hstate_k$ at each time $k$ given $a_{1:k},\state_{0:k}$.
Substituting (\ref{eq:lineargaussian})  for $\obs_{k+1}$ in (\ref{eq:kalman}), we see that
(\ref{eq:kalman}), (\ref{eq:linearaction}) constitute a linear Gaussian system with un-observed state  $\hstate_k$, observations $\action_k$,
and known exogenous  input $\state_k$:
\beq \label{eq:inversekf}
\begin{split}
  {\hstate}_{k+1} &=   (I - \kg_{k+1} \obsm) \, \statem \hstate_{k} + \kg_{k+1} \onoise_{k+1} + \kg_{k+1} \obsm \state_{k+1} \\
   \action_k &= \fun(\kalmancov_k)\,\hstate_k + \anoise_k, \quad
   \anoise_k \sim \text{ iid } \normal(0,\anoisecov) \\
  & \text{ where }  \kg_{k+1} = \kalmancov_{k+1|k} \obsm^{\p}  \Sig_{k+1}^{-1} 
\end{split}
\eeq
$
\kg_{k+1}$   is  called the Kalman gain.

To summarize,  our filtered estimate of the adversary's filtered estimate    $\hstate_k$ given measurements $a_{1:k},\state_{0:k}$ is achieved by running ``our''  Kalman filter on the linear Gaussian state space model  (\ref{eq:inversekf}), where
$\hstate_k, \kg_k, \kalmancov_k$ in (\ref{eq:inversekf}) are generated by the adversary's  
Kalman filter. Therefore, our Kalman filter uses the parameters
\beq
\begin{split}
\enemystatem_{k}  &=  (I - \kg_{k+1} \obsm)\statem, \;
\enemyinputm_k = \kg_{k+1} \obsm ,\;
  \enemyobsm_k =  \fun(\kalmancov_k) , \\
\enemysnoisecov_k  &= \kg_{k+1}\, \kg_{k+1}^\p, \;
\enemyonoisecov = \anoisecov
\end{split} \label{eq:inversekfparam}
  \eeq
 The equations of our inverse Kalman filter are:
\beq
 \begin{split}
  \enemykalmancov_{k+1|k} &=  \enemystatem_k \enemykalmancov_{k} \enemystatem_k^\p  +  \enemysnoisecov_k  \\
  \enemySig_{k+1} &= \enemyobsm_{k+1} \enemykalmancov_{k+1|k} \enemyobsm_{k+1}^\p + \enemyonoisecov  \\
   \enemystate_{k+1} &= \enemystatem_k\, \enemystate_k + 
\enemykalmancov_{k+1|k} \enemyobsm_{k+1}^{\p}  \enemySig_{k+1}^{-1} 
                        \\ & \pushright{\times \big[\action_{k+1} -   \enemyobsm_{k+1} \left(\enemystatem_{k} \hstate_k+ \enemyinputm_k \state_{k+1} \right) \big]} 
  \\
   \enemykalmancov_{k+1} &=
\enemykalmancov_{k+1|k} -  
\enemykalmancov_{k+1|k} \enemyobsm_{k+1}^{\p}  \enemySig_{k+1}^{-1} 
\enemyobsm_{k+1} \enemykalmancov_{k+1|k} 
\end{split} \label{eq:inversekfequations}
\eeq
Note $\enemystate_k$ and $\enemykalmancov_k$ denote our conditional mean estimate and covariance of the adversary's conditional mean $\hstate_k$.

  \subsection{Particle Filter for Inverse Filtering}
  In general the optimal inverse filter (\ref{eq:post}) does not have a finite dimensional statistic. Also,  the inverse HMM filter (\ref{eq:hmmpost}) is intractable for large data lengths. One needs to resort to an approximation such as sequential MCMC. Here we  describe a  particle filter that uses the optimal importance function (i.e., minimizes  the variance of the importance weights).
  Due to the different dependency structure compared to classical state space models (see Figure \ref{fig:graph}),
  the  updates for the  importance weights given below are  different
  compared to classical particle filters.
  
  The particle filter constructs the $\nparticles$-point Dirac delta function approximation to the posterior as:\footnote{We assume here that the adversary computes the belief $\belief$ exactly. Of course, in reality, the adversary itself would use a sub-optimal approximation to the optimal filter.}
  $$\pdf(\belief_{0:k},\obs_{1:k}| \state_{0:k},\action_{1:k}) \approx
  \sum_{i=1}^\nparticles \frac{\weight_k^{(i)}(\belief^{(i)}_{0:k},\obs^{(i)}_{1:k})}
  {\sum_{j=1}^\nparticles \weight_k^{(i)}(\belief^{(j)}_{0:k},\obs^{(i)}_{1:k})}
  \delta(\belief^{(i)}_{0:k},\obs^{(i)}_{1:k}) $$

  In complete analogy to the classical particle filter, we construct the sequential importance sampling update as follows:
\begin{multline} \label{eq:numerator}
  \pdf(\belief_{0:k},\obs_{1:k}| \state_{0:k},\action_{1:k}
  )
  \propto
  \pdf(\action_k| \belief_k) \, \pdf(\belief_k| \belief_{k-1},\obs_k)\,
  \pdf(\obs_k|\state_k)\\ \times \pdf(\state_k|\state_{k-1})\,
  \pdf(\belief_{0:k-1},\obs_{1:k-1}| \state_{0:k-1},\action_{1:k-1})
  \end{multline}
  The importance density $\imp(\cdot)$ is chosen such that
  \begin{multline} \label{eq:denominator}
    \imp(\belief_{0:k},\obs_{1:k} | \state_{0:k},\action_{1:k})   =
     \imp(\belief_{0:k-1},\obs_{1:k-1} | \state_{0:k-1},\action_{1:k-1}) 
    \\
  \times   \imp (\belief_k,\obs_k| \belief_{0:k-1}, \obs_{0:k-1},\state_{0:k},\action_{1:k}) 
  \end{multline}
  Then defining the importance weights
  $$\weight_k^{(i)}(\belief_{0:k}^{(i)},\obs_{1:k}^{(i)}) =
  \frac{\pdf(\belief_{0:k}^{(i)},\obs_{1:k}^{(i)}| \state_{0:k},\action_{1:k})} { \imp(\belief_{0:k}^{(i)},\obs_{1:k}^{(i)}| \state_{0:k},\action_{1:k})}$$ (\ref{eq:numerator}), (\ref{eq:denominator}) yield the following time recursion:
  \begin{multline} \label{eq:weight}
    \weight_k^{(i)}(\belief_{0:k}^{(i)},\obs_{1:k}^{(i)}) \propto
    \weight_{k-1}^{(i)}(\belief_{0:k-1}^{(i)},\obs_{1:k-1}^{(i)})  \\
\times  \frac{\aprob_{\belief_k^{(i)},\action_k}\, \delta(\belief_k^{(i)} - \filter(\belief_{k-1}^{(i)},\obs_k^{(i)}))\, \oprob_{\state_k,\obs_k^{(i)}}\, \tp_{\state_{k-1},\state_k}} {
    \imp(\belief_k^{(i)},\obs_k^{(i)}|
    \belief_{0:k-1}^{(i)},\obs_{0:k-1}^{(i)},\state_{0:k},\action_{1:k})}
\end{multline}
In particle filtering, it  recommended to use the so called ``optimal'' importance density \cite{RAG04,CMR05}, namely,  $\imp^* = \pdf(\belief_k,\obs_k| \belief_{k-1},\obs_{k-1}, \state_{k},\action_{k} )$.
Note that in our case $\imp^*=  \pdf(\belief_k | \belief_{k-1},\obs_k)\, \pdf(\obs_k| \state_k,\action_k)
= \delta\big(\belief_k - \filter(\belief_{k-1},\obs_k)\big)\, \pdf(\obs_k|\state_k)$ is straightforward to sample from. Also using $\imp^*$,  the importance weight update
(\ref{eq:weight}) becomes
 \begin{multline} \label{eq:optweight}
\weight_k^{(i)}(\belief_{0:k}^{(i)},\obs_{1:k}^{(i)}) \propto
    \weight_{k-1}^{(i)}(\belief_{0:k-1}^{(i)},\obs_{1:k-1}^{(i)})  \,
    \aprob_{\belief_k,\action_k} \tp_{\state_{k-1},\state_k}
  \end{multline}
  In summary, the particle filtering algorithm becomes:\\
{\em Sequential Importance Sampling step}: At each time $k$
\begin{compactitem}
\item For $i=1,\ldots,\nparticles$ sample from importance density $\imp$ or~$\imp^*$ $$(\tilde{\belief}_k^{(i)},\tilde{\obs}_k^{(i)}) \sim \imp(\tilde{\belief}_k,\obs_k| \belief^{(i)}_{0:k-1},\obs^{(i)}_{1:k-1},\state_{0:k},\action_{1:k})$$
  Set $(\tilde{\belief}_{0:k}^{(i)},\obs_{1:k}^{(i)}) =
  (\belief^{(i)}_{0:k-1},\tilde{\belief}_k^{(i)},\obs^{(i)}_{1:k-1}, \tilde{\obs}_k^{(i)}) $.

 \item  Update importance weights $\weight_k^{(i)}$ using  (\ref{eq:weight}) or (\ref{eq:optweight}).

\end{compactitem}

{\em Resampling/Selection step}: Multiply/Discard particles  with 
high/low normalised importance weights
to obtain $\nparticles$ new particles with unit weight.

Finally, the particle filter estimate of the posterior $\belief_k$ at each time~$k$ is
$\E\{\belief_k| \action_{1:k},\state_{0:k}\} \approx  \sum_{i=1}^\nparticles  \weight_k^{(i)} \belief_k^{(i)}/ \sum_{i=1}^\nparticles  \weight_k^{(i)} $.

To summarize, the particle filter provides a  tractable sub-optimal algorithm  for inverse filtering and it is straightforward to sample from the optimal importance density.

\subsection{Inverse Filtering for Multiagent Social Learning Models} \label{sec:social}

Thus far we have considered inverse filtering based on the graphical  model in Figure \ref{fig:graph}.
Below, motivated  by Bayesian social learning models   in behavioral economics and sociology (which is also now popular in signal processing), we consider inverse filtering for the graphical model depicted in Figure \ref{fig:graphsocial}.
The adversary now is a 
 multi-agent system that aims to estimate the state of ``our" underlying Markov state $\state_k \in
 \states$.
The key difference compared to Figure \ref{fig:graph}  is  that in Figure~\ref{fig:graphsocial}   the previous action taken by the adversary  directly affects the dynamics of the  belief update.  Such social learning is motivated by multi-agent decision making in social groups/networks and can result in spectacular behavior such as  cascades and herding; see \cite{KP14} for an extensive signal processing centric discussion.

\begin{figure}[h]
\centering
\begin{tikzpicture}[thick,scale=0.9, every node/.style={transform shape}]
\tikzstyle{main}=[circle, minimum size = 5mm, thick, draw =black!80, node distance = 6mm]
\tikzstyle{connect}=[-latex, thick]
  \node[main] (x0) {$x_0$};
  \node[main] (x1) [right=of x0] {$x_1$};
  \node[main] (x2) [right=of x1] {$x_2$};
  \node[main] (xk) [right= of x2] {$x_3$};
  \path (x0) edge [connect] (x1)
        (x1) edge [connect] (x2)
        (x2) edge [connect] (xk);
  
  \node[main] (y1) [below=of x1] {$y_1$ };
  \node[main] (y2) [right=of y1] {$y_2$};
  \node[main] (yk) [right=of  y2] {$y_3$};
  \path (x1) edge [connect] (y1)
        (x2) edge [connect] (y2)
        (xk) edge [connect] (yk);
  
  \node[main] (pi1) [below=of y1] {$\pi_1$ };
  \node[main, fill = black!10] (pi0) [left=of pi1]{$\pi_0$};
  \node[main] (pi2) [right=of pi1] {$\pi_2$};
  \node[main] (pik) [right= of pi2] {$\pi_3$};
  \path (pi0) edge [connect] (pi1)
    
        (y1) edge [connect] (pi1)
        (y2) edge [connect] (pi2)
        (yk) edge [connect] (pik);

  \node[main] (a1) [below=of pi1]{$\laction_1$};
  \node[main] (a2) [right=of a1] {$\laction_2$};
  \node[main] (ak) [right= of  a2] {$\laction_3$};
  \node[main, fill = black!10] (u1) [below=of a1]{$\action_1$};
  \node[main, fill = black!10] (u2) [below=of a2]{$\action_2$};
  \node[main, fill = black!10] (u3) [below=of ak]{$\action_3$};
  \path (pi1) edge [connect] (a1)
        (pi2) edge [connect] (a2)
        (pik) edge [connect] (ak)
        (a1) edge [connect] (pi2)
        (a2) edge [connect] (pik)
        (a1) edge [connect] (u1)
        (a2) edge [connect] (u2)
        (ak) edge [connect] (u3)
        ;
\end{tikzpicture}
\caption{Graphical representation of social learning filtering problem. The shaded nodes are known to us, while the white nodes are unknown}
\label{fig:graphsocial}
\end{figure}
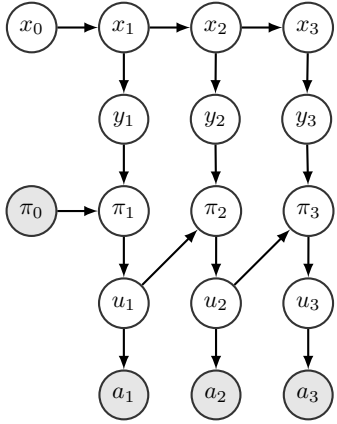

\subsubsection{Social Learning Protocol}
We start by describing the classic Bayesian  social learning protocol.
 In  social learning, the multiagent adversary learns (estimates) our state  based on their observations  of the state and  actions taken by previous agents.
Each agent takes an action $\laction_k$  once  in a predetermined sequential order indexed by $k=1,2,\ldots$   The index $k$ can also be viewed
as the  time instant when agent $k$ acts.  
Assume at the beginning of iteration $k$,
all agents have access to the public belief $\belief_{k-1}$ defined in  Step (iv) below.

The social learning protocol proceeds as follows
 \cite{BHW92,Cha04}:\\
 (i) {\em Private Observation}: At time $k$,
agent $k$  records a private observation $\obs_k\in \obspace= \{1,2,\ldots,\obsdim\}$
from the observation distribution $\oprob_{iy} = P(\obs_k=y|\state_k=i)$, $i \in \statespace$.
\\
(ii) 
{\em Private Belief}:  Using the public belief $\belief_{k-1} $ available at time $k-1$ (Step (iv) below), agent $k$   updates its private
posterior belief  $\priv_k(i)  =  \prob(\state_k = i| \laction_1,\ldots,\laction_{k-1},\obs_k)$ using the classical Bayesian filter (\ref{eq:belief}) which reads:
\begin{align}  \label{eq:hmm} \priv_k &= 
\frac{\oprob_{\obs_k} \tp^\p\, \belief}{ \ones^\p \oprob_y  \belief}, \;  
\oprob_{\obs_k} = \text{diag}(P(\obs_k|\state_k=i), i\in \statespace) . 
 \end{align}
 Here  $\priv_k$ is an $\statedim$-dimensional probability mass function (pmf). \\
  (iii)   {\em Myopic Action}: Agent  $k$  takes  action $$\laction_k\in \actionspace = \{1,2,\ldots, \actiondim\}$$ to  minimize its expected cost 
\beq
  \laction_k =  \arg\min_{\laction \in \actionspace} \E\{\cost(\state_k,\laction)|\laction_{1:k-1},\obs_k\}   =\arg\min_{\laction\in \actionspace} \{\cost_\laction^\p\priv_k\}.    \label{eq:myopic}
 \eeq
  Here $\cost_\laction = (\cost(i,\laction), i \in \statespace)$ denotes an $\statedim$ dimensional cost vector, and $\cost(i,\laction)$ denotes the cost  incurred when the underlying state is $i$ and the  agent chooses action $\laction$.\\
(iv) {\em Social Learning Filter}:   
Given the action $\laction_k$ of agent $k$,  and the public belief $\belief_{k-1}$, each  subsequent agent $k' > k$ 
performs social learning to
update the public belief $\belief_k$ according\footnote{For the reader unfamiliar with social learning, the remarkable aspect of the social learning filter is that the likelihood $\aoprob_{\laction}^\belief$ is an explicit function of the prior $\belief$; whereas in classical filtering the prior and likelihood are functionally independent. This dependence of the likelihood on the prior that causes social learning to have  unusual behavior such as herding and information cascades.} to the  ``social learning  filter":\ \beq \belief_k = \filter(\belief_{k-1},\laction_k), \text{ where } \filter(\belief,\laction) = 
 \frac{\aoprob_{\laction}^\belief\,\tp^\p\,\pi}{\filterd(\belief,\laction)} \label{eq:socialf}\eeq
where
$\filterd(\belief,\laction) = \ones^\p \aoprob^\belief_\laction \tp^\p \belief$ is the normalization factor of the Bayesian update.
In (\ref{eq:socialf}),  the public belief $\belief_k(i)  = P(\state_k = i|\laction_1,\ldots \laction_k)$ and $\aoprob^\belief_\laction  = \text{diag}(\prob(\laction|x=i,\belief),i\in \statespace) $ has elements
\begin{align*} 
 & P(\laction_k = \laction|x_k=i,\belief_{k-1}=\belief) = \sum_{\obs\in \obspace} \prob(\laction|y,\belief)\,\prob(y|\state_k=i) \\
 &   P(\laction_k=\laction|y,\belief) = \begin{cases}  1 \text{ if }  \cost_\laction^\p B_y \tp^\p \pi \leq \cost_{\tla}^\p \oprob_y\, \tp^\p\,\pi, \; \tla \in \actionspace \\
 0  \text{ otherwise. }  \end{cases}
  \end{align*}
where $I(\cdot)$ is the indicator function and $B_y$ are the classical observation likelihoods is defined in (\ref{eq:hmm}).

\subsubsection{Inverse Social Learning Filter}
Given  the above classical Bayesian social learning model,
our aim is to estimate the public belief $\belief_k$ given the state sequence $\state_{0:k}$ and
noisy measurements $\action_{1:k}$ of the agents actions $\laction_{1:k}$.

In order to proceed, first note that,
as  shown \cite{Kri12}, the social learning filter (\ref{eq:socialf}) has the following property:
The belief space $\Belief$ can be partitioned into   $\actiondim$ (possibly empty) subsets denoted $\Belief^1,\ldots,\Belief^\actiondim$ such
that the prior dependent likelihood $\aoprob^\belief_\action$ in
 (\ref{eq:socialf})
is a peiecewise constant function of prior belief $\belief$. i.e.,
$$\aoprob^\belief_\laction  = \aoprob^l_\laction , \quad \belief \in \Belief^l,
\quad l \in \{1,\ldots,\actiondim\}.$$

In complete analogy to Sec.\ref{sec:inversehmm}, we can now construct the inverse social learning filter: Let $\aprob_{\laction,\action} = \prob(\action|\laction)$.

For $k=1,2,\ldots$ construct $\Belief_k$   via the following recursion:
$$\Belief_k = \big\{ \filter(\belief,\laction), \;\laction \in \actionspace, \belief \in \Belief_{k-1} \big\},  \quad \Belief_0 = \{\belief_0\}.$$
 Note  $\Belief_k$ has
$\actiondim^k$ elements.
The inverse social learning filter computes the posterior $\post(\belief)$ and conditional mean of belief $\belief \in \Belief_{k+1}$ as follows:
\begin{align}
 & \post_{k+1}(\belief) = \frac{\aprob_{\laction^*,\action_{k+1}}
    \,  \sum_{\bbelief \in \Belief_k} \oprob_{\state_{k+1}, \obs_{\bbelief,\belief,\laction^*}}\, \post_k(\bbelief) }
  {\sum_{\belief \in \Belief_{k+1}} \aprob_{u^*,\action_{k+1}}
    \,  \sum_{\bbelief \in \Belief_k} \oprob_{\state_{k+1}, \obs_{\bbelief,\belief,u^*}}\, \post_k(\bbelief) }
\nonumber  \\
 & \E\{\belief_{k+1}|\action_{1:k+1},\state_{0:k+1}\} = \sum_{\belief \in \Belief_{k+1}} \belief \post_{k+1}(\belief)
 \label{eq:socialpost}
  \end{align}
Here $\laction^*(\bbelief,\belief)$   is the action such that $\belief = \filter(\bbelief,\laction)$. Also, $\obs(\bbelief,\belief,u)$ is the observation such that (\ref{eq:myopic}) holds.

\subsection{Inverse Bayesian Localization  using Conjugate Priors}
We conclude this section by discussing a  two-time scale inverse Bayesian localization problem. Recall that
localization refers to the case where our state is a random variable
$\state_0$ instead of a random process.

\subsubsection{Model}
Assume our state is a finite valued;  so $\state_0 \in \states$ is a random variable with prior $\belief_0$. (The  transition kernel $\tp = I$).
The adversary aims to localize (estimate)  our state based on  noisy measurements  that are exponentially distributed (this models the received radar signal power)
\beq \obs_k \sim \oprob_{\state_0, \obs} = \rate_{\state_0} \exp( - \rate_{\state_0} \obs), \quad k = 1,2,\ldots \label{eq:rxpower}
\eeq
In (\ref{eq:rxpower}), $\rateo \ole \rate_{\state_0}$ denotes the power gain parameter at the receiver when the true state is $\state_0$.
Note that  $k$ indexes the slow time scale. The  measurements  $\obs_k$ are obtained by integrating the received power over a fast time scale defined below.

The adversary computes belief $\belief_k$ using the Bayesian filter  (\ref{eq:belief}) with $\tp = I$ and observation kernel
$\oprob_{\state,\obs}$ with parameter $\rate = (\rate_1,\ldots,\rate_\statedim)^\p$. 
 Let $\bbelief_k(i) = \pdf(\state,\obs_{1:k})$ denote the un-normalized posterior density at time $k$. Since $\oprob_{\state\obs}$ is an exponential density, each element $\log \bbelief_k(i)$,
$i \in \states$
evolves with linear dynamics:
\beq
\log \bbelief_{k+1} (i) = \log \rate_i +  \log \bbelief_k(i)  - \rate_i \obs_{k+1}. \label{eq:localstate}
\eeq
Equivalently, relative to some fixed location 1, defining $\dbelief_k(i) = \log \bbelief_k(i) - \log \bbelief_k(1)$, the dynamics of the adversary's belief update are given by the linear stochastic system
\beq  \dbelief_{k+1}(i) = \dbelief_k(i)  +  \log \frac{\rate_i }{\rate_1} - (\rate_i - \rate_1) \obs_{k+1}
\label{eq:dbelief}
\eeq
Let $t=1,\ldots,\fastT$ denote the fast time scale at each epoch $k$.
We assume that our measurement of the adversary's radiated power in units of decibels (relative to location 1)  at fast time scale $(\fast,k)$ in each location $i \in \stateswi$  is
\beq  \action_{\fast,k}(i) =
\mean_i +  \frac{\onoise_{\fast,k}(i)}{\sqrt{|\dbelief_k(i)|} }
 \label{eq:localobs} \eeq
 where $\mean_i$ is the path loss and $\onoise_{\fast,k}(i)$ is zero mean unit variance iid Gaussian.
 Equivalently the absolute power is proportional to $ \exp(\mu_i) \, z_{\fast,k}(i)$ where $z_{\fast,k}(i) = (\belief_k(1)/\belief_k(i))^{\onoise_{\fast,k}(i)}$ is
 log-normal distributed.
The intuition behind  (\ref{eq:localobs}) is as follows: Since the adversary's radar radiates energy
to location $i$ proportional to $\belief_k(i)$,  so our variance of the measurement noise is inversely proportional
to $\belief_k(i)$. Hence the scaling of the noise variance  by $1/\Delta_k(i)$
in (\ref{eq:localobs}) in log-scale.

\subsubsection{Inverse Localization}
The aim is to estimate adversary's belief $\dbelief_k$ given $\action_{1:\fastT,k}$, i.e., compute $\pdf(\dbelief_k|\action_{1:\fastT,k})$
given $\state_0$.
Recall $\action_{1:\fastT,k}=(\action_{1,k},\ldots,\action_{\fastT,k})$.
 From (\ref{eq:localstate})
 $$ \dbelief_k(i) = k \log \frac{\rate_i}{\rate_1}+  \dbelief_0(i)
  - (\rate_i - \rate_1)\,
\sum_{\dtime=1}^k \obs_\dtime 
 $$
 
Denote $S_k = \sum_{\dtime=1}^k \obs_\dtime $.
Since $\{\obs_\dtime\}$ are iid exponential distributed, hence
 $S_k$  has a  Gamma distribution
with pdf  
$$ \pdf_{S_k}(s) = \Ga(k,\rateo) = \rateo \,e^{-\rateo s} \,\frac{ (\rateo s)^{k-1}}{(k-1)!},  \quad s \geq 0
$$ 
The Gaussian likelihood (\ref{eq:localobs}) with Gamma prior for the precision (inverse of variance)  form a Bayesian conjugate pair; implying that the posterior is  also a Gamma  distribution. Therefore, we have
the following Bayesian estimator of the adversary's relative belief $\dbelief_k$:

\begin{thm} For the model (\ref{eq:rxpower}), (\ref{eq:localstate}), (\ref{eq:localobs}), our posterior of the adversary's belief given $\action_{1:\fastT,k},\state_0$  has a Gamma density:
$$ \pdf(\log \dbelief_k(i)= s| \action_{1:\fastT,k} ,\state_0) = \frac{1}{\rate_i-\rate_1} \pdf_{S_k| \action_{1:\fastT,k}}\left( \frac{\parn_i - s}{\rate_i-\rate_1} \right)  $$
where $s\geq 0, \parn_i = k \log \rate_i/\rate_1 + \log \dbelief_0(i)$ and
  $$ \pdf_{S_k| \action_{1:\fastT,k}}(s | \action_{1:\fastT,k}) = \Ga\big(k+ \frac{\fastT}{2}, \rateo + \frac{1}{2} \big( \sum_{\fast=1}^{\fastT} \action_{\fast,k}(i) -
  \mean_i \big)^2 \big)$$
  \end{thm}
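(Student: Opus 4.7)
The strategy is to decompose the posterior into two familiar pieces: (i) the posterior of the accumulated observation sum $S_k = \sum_{n=1}^{k}\obs_n$ given the fast-time-scale measurements $\action_{1:\fastT,k}$, which can be obtained by standard Gaussian-Gamma conjugacy, and (ii) a deterministic change of variables from $S_k$ to $\dbelief_k(i)$ via the linear relation inherited from the recursion (\ref{eq:localstate}). Concretely, unrolling (\ref{eq:localstate}) gives
\[
\dbelief_k(i) \;=\; \parn_i \;-\; (\rate_i-\rate_1)\,S_k,\qquad \parn_i = k\log(\rate_i/\rate_1)+\dbelief_0(i),
\]
so that $\dbelief_k(i)$ is an affine, invertible function of $S_k$. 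Hence it suffices to derive the conditional density of $S_k$ and then apply the Jacobian $1/(\rate_i-\rate_1)$ as claimed.

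The first step is to identify the prior on $S_k$. Since the $\obs_n$ are i.i.d. $\text{Exp}(\rateo)$ under the true state $\state_0$, $S_k$ has the Gamma prior $\Ga(k,\rateo)$ given in the statement. The second step is to write down the likelihood. From (\ref{eq:localobs}), conditionally on $\dbelief_k(i)$ the measurements $\action_{\fast,k}(i)$ for $\fast=1,\ldots,\fastT$ are i.i.d.\ $\normal(\mean_i,\,1/|\dbelief_k(i)|)$, so viewing $|\dbelief_k(i)|$ as a precision parameter we obtain a likelihood of the canonical form
\[
p(\action_{1:\fastT,k}\mid |\dbelief_k(i)|)\;\propto\;|\dbelief_k(i)|^{\fastT/2}\exp\!\Bigl(-\tfrac{|\dbelief_k(i)|}{2}\textstyle\sum_{\fast=1}^{\fastT}(\action_{\fast,k}(i)-\mean_i)^2\Bigr).
\]

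The third step is to combine prior and likelihood. Because the likelihood (as a function of the precision) is proportional to $\tau^{\fastT/2}\exp(-\tau T/2)$ with $T=\sum_{\fast}(\action_{\fast,k}(i)-\mean_i)^2$, and the prior on the underlying $S_k$ is $\Ga(k,\rateo)$, Gaussian-Gamma conjugacy immediately yields the posterior
\[
S_k \mid \action_{1:\fastT,k} \;\sim\; \Ga\!\Bigl(k+\tfrac{\fastT}{2},\;\rateo+\tfrac{1}{2}\textstyle\sum_{\fast=1}^{\fastT}(\action_{\fast,k}(i)-\mean_i)^2\Bigr),
\]
matching $\pdf_{S_k\mid\action_{1:\fastT,k}}$ in the statement. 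The fourth step is the change of variable: substituting $S_k = (\parn_i - s)/(\rate_i-\rate_1)$ into this density and multiplying by the Jacobian $|\rate_i-\rate_1|^{-1}$ delivers the displayed formula for the posterior of $\dbelief_k(i)$.

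The main obstacle is the dependence of the precision on $S_k$: strictly speaking $|\dbelief_k(i)|$ is an \emph{affine} function of $S_k$ rather than $S_k$ itself, which would in principle introduce a shift $\parn_i/(\rate_i-\rate_1)$ into the Gamma rate and break exact conjugacy. The clean form of the stated posterior indicates that this offset is absorbed into the parameterization of $\dbelief_k(i)$ by treating $(\rate_i-\rate_1)S_k$ as the precision contribution from the observation process and handling the constant $\parn_i$ through the change of variables. I would therefore carry out the algebra carefully, keeping track of the Jacobian and verifying that the exponent in the posterior collapses to $\rateo + \tfrac12 \sum_\fast(\action_{\fast,k}(i)-\mean_i)^2$ after the substitution, which is where any remaining bookkeeping with $\parn_i$ and $\rate_i-\rate_1$ is settled.
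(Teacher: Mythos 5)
Your plan follows exactly the argument the paper itself gives (in a single sentence preceding the theorem): the Gaussian likelihood (\ref{eq:localobs}) with a Gamma prior on the precision is a conjugate pair, so the posterior of $S_k$ is Gamma, and the affine map $\dbelief_k(i) = \parn_i - (\rate_i-\rate_1)S_k$ obtained by unrolling (\ref{eq:localstate}) plus a Jacobian factor $1/(\rate_i-\rate_1)$ gives the displayed density. The obstacle you flag in your final paragraph --- that the precision $|\dbelief_k(i)|$ is affine in $S_k$ rather than proportional to it, so the offset $\parn_i$ and the factor $\rate_i-\rate_1$ should in principle appear in the posterior rate parameter --- is genuine and is not resolved by the paper, which states the theorem without any proof beyond the conjugacy remark; your write-up is, if anything, more careful on this point than the source.
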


\section{Sequential Localization Game: Covariance Bounds}

Suppose our state $\state_0$ is a random variable and the adversary localizes our state via a Bayesian estimator. We observe the adversary's actions  and estimate the adversary's belief. Since the adversary's belief $\belief_k$ converges to a Dirac mass centered on our  state $\state_0$ with probability one (under suitable regularity conditions\footnote{The Bernstein von-Mises theorem \cite{Vaa00}  yields  a central limit theorem for the Bayesian estimator and therefore the asymptotic convergence rate.}), our posterior
$\alpha_k(\belief)$ computed via the inverse filter, will also converge with probability one to a Dirac mass centered on $\state_0$. The question we address in this section is:
{\em How much slower is the convergence of our inverse filter compared to the adversary's Bayesian filter?}

To make our analysis illustrative,
throughout this section, we 
consider a scalar-valued localization problem with  the inverse Kalman filter.
Assume $\state_0 \in \reals$ (scalar state),
  $\statem = 1$, $\obsm=1$,  $\snoisecov = 0$, $\onoisecov = 1$; so the adversary observes us with SNR  1.
Assume the adversary's prior is non-informative; so $\hstate_0 = 0_{\statedim},\kalmancov_0 = \infty$.
\subsection{Localization and Inverse Kalman Filter}
We start with an elementary but useful observation.
Using the information filter (\ref{eq:info}), it readily follows that the adversary's localization estimate is
$$ \hstate_{k+1} = \frac{1}{k+1} \sum_{\dtime=1}^{k+1} \obs_\dtime
,  \qquad \kalmancov_{k+1} = \frac{1}{k+1}$$
Supose we observe the adversary's action with $\enemySNR_k = \enemyobsm_k^2/ \enemyonoisecov_k$ and initial condition $\enemykalmancov_0 = \infty$. Then  the covariance of our estimate (using inverse Kalman filter (\ref{eq:inversekf})) of the adversary's belief evolves as
 \beq \enemykalmancov_{k+1} = \frac{k^2 \enemykalmancov_k + 1  }{(k+1)^2 + \enemySNR_{k+1}\,(k^2 \enemykalmancov_k + 1)}  \label{eq:case1} \eeq
From (\ref{eq:case1}), it is easily seen that $$\enemykalmancov_k \vert_{\enemySNR= 1} <
\enemykalmancov_k\vert_{\enemySNR = \Sigma_k} < 
 \kalmancov_k$$ i.e., our covariance of the adversary's state estimate   is strictly smaller than the adversary's covariance of our state.

 \subsection{Sequential Localization Game in Adversarial Setting}
 Thus  far  we assumed that  $\state_0$ is known to us and our aim was to estimate the adversary's estimate. In this section, our framework is different in two ways. First, we assume that
 neither us nor the adversary know the underlying state $\state_0$. For example, $\state_0$ could denote the threat level of an autonomous unidentified drone hovering relative to a specific location. Second, there is feedback; the adversary's action affects our estimate and our action affects the adversary's estimate.

 The setup is naturally formulated in game-theoretic terms.
 The  adversary and us play a sequential game  to localize  (estimate)~$\state_0$:
 \begin{compactenum}
 \item At odd time instants $k=1,3,\ldots$, the  adversary takes active  measurements of the target  and makes decisions based on its estimate.  We  use measurements of the adversary's decisions to localize the target. The details are as follows: The
 adversary  has a noisy measurement   $\action_{k-1}= \laction_{k-1}+ \anoise_{k-1}$ where 
 $\laction_{k-1} = \hstate_{k-1}$ is our action. The adversary assumes that our state estimate is $\action_{k-1}$ and
its Kalman filter  (\ref{eq:kalman}) then  yields
the updated state  estimate
$$\hstate_k = (1 - \kg_{k}) \action_{k-1} + \kg_{k} \obs_k, \qquad
\text{ where } \kg_k = \kalmancov_k.
$$
The adversary then 
 takes action $\laction_k= \hstate_k$ to myopically minimize the mean square error of the estimate.
We eavesdrop (observe)  the adversary's action in noise as $\action_k=\hstate_k + \anoise_k$. Assume $\anoise_k \sim \normal(0,\anoisecov)$ iid. So
$$ \action_{k} = (1 - \kg_{k}) \action_{k-1} + \kg_{k} \onoise_{k} +
\kg_{k} \state_0 + \anoise_{k} $$
Since  $\action_{k-1} = \laction_{k-1}+ \anoise_{k-1}$,
 our effective observation equation of $\state_0$  based on our measurement of the  adversary's action $\action_k$ is
\beq \eobs_k \ole  \frac{\action_k }{\kg_k} - \frac{1-\kg_k}{\kg_k}\,\laction_{k-1} 
= \frac{1-\kg_k}{\kg_k}\,\anoise_{k-1}+ \onoise_k + \frac{\anoise_k}{\kg_k} + \state_0
\label{eq:eobs}
\eeq
We know  all the quantities in  the middle  equation; $\laction_{k-1}$ was our action taken at time $k-1$, and $\action_k$ is our measurement of the adversary's action at time $k$.

To summarize, at odd time instants,
 $\eobs_k = \frac{1-\kg_k}{\kg_k}\,\anoise_{k-1}+ \onoise_k + \frac{\anoise_k}{\kg_k} + \state_0$ is our effective observation of the unknown state $\state_0$ purely based on sensing the adversary's actions.
\item At even time instants $k=2,4,\ldots,$ we take active measurements of the target and take actions based on our estimates. The adversary uses measurements of our actions to localize the target. We  use our measurement $\action_{k-1}$ of the adversary's action, obtain a noisy observation $\obs_k$ of $\state_0$
 and then  compute belief $\belief_k$. We then take action $\laction_k$ to myopically minimize the mean square error of the estimate. The adversary eavesdrops on (observes) our action in noise as $\action_k$.
\end{compactenum}
The above setup constitutes a social learning sequential game \cite{MMST18}. Since the decisions are made myopically each time (to minimize the mean square  error), the strategy profile at each time constitutes a Nash equlibrium. More importantly \cite[Theorem 5]{MMST18}, the asymptotic behavior (in time) is captured by a social learning equlibrium (SLE).

\begin{thm} \label{thm:gamecov}
  Consider the sequential game formulation for localizing the random variable state $\state_0$. Then for large $k$,
  the covariance of our estimate of the state $\state_0$ is
  $\kalmancov_k = 2 k^{-1}$ (which is twice the covariance for classical localization).
  The same result holds for the adversary's estimate.
\end{thm}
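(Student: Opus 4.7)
The plan is to compute our posterior precision $1/\kalmancov_k$ as a sum of conditionally independent information contributions from two streams of observations, and to argue that observations of the adversary's actions contribute only a bounded amount of precision asymptotically, so that only our direct measurements matter. By the symmetry of the sequential game, the same reasoning gives the adversary's covariance, yielding both statements of the theorem.

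First I would decompose our data at time $k$ into two streams: (i) the $\approx k/2$ direct observations $\obs_{k'}$ at even times $k' \le k$, each $\state_0$ plus independent unit-variance Gaussian noise, and (ii) the $\approx k/2$ derived observations $\eobs_{k'}$ from (\ref{eq:eobs}) at odd times, each of the form $\state_0$ plus zero-mean Gaussian noise of variance
\[
V_{k'} \;=\; 1 \;+\; \frac{\anoisecov}{\kg_{k'}^{2}} \;+\; \frac{(1-\kg_{k'})^{2}}{\kg_{k'}^{2}}\,\anoisecov,
\]
with $\kg_{k'}$ the adversary's Kalman gain at that step. Since the noises $\onoise_{k'},\anoise_{k'},\anoise_{k'-1}$ are mutually independent across time and independent of our own direct observation noise, the two streams together form a sequence of conditionally (on $\state_0$) independent Gaussian observations; the Bayesian information-filter precision update (\ref{eq:info}) then simply adds their contributions:
\[
\frac{1}{\kalmancov_k} \;=\; \sum_{k'\le k,\,k'\text{ even}} 1 \;+\; \sum_{k'\le k,\,k'\text{ odd}} \frac{1}{V_{k'}}.
\]

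Second, I would show that the adversary's Kalman gain decays like $\kg_{k'} = \Theta(1/k')$. With $\statem=1$, $\snoisecov=0$, $\obsm=1$, $\onoisecov=1$, the adversary's scalar covariance recursion reduces to $1/\kalmancov^{\mathrm{adv}}_{k+1} = 1/\kalmancov^{\mathrm{adv}}_{k} + 1$ whenever a fresh direct observation is incorporated, giving $\kalmancov^{\mathrm{adv}}_{k'} = \Theta(1/k')$ and hence $\kg_{k'} = \Theta(1/k')$. Substituting into $V_{k'}$ yields $V_{k'} = \Theta(k'^{2})$, so $\sum_{k'\text{ odd}} 1/V_{k'} = O(1)$. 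The direct-observation stream contributes $k/2 + O(1)$ to the precision, and therefore $1/\kalmancov_k = k/2 + O(1)$, i.e.\ $\kalmancov_k \sim 2/k$, which is twice the classical $1/k$ of non-adversarial localization.

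The main obstacle will be justifying step two in the presence of the model mismatch flagged in the setup: the adversary treats our noisy action $\action_{k-1}$ as if it were our exact estimate $\laction_{k-1}$, so a textbook Kalman convergence argument does not apply off the shelf. The cleanest route is a self-consistent closure: assume $\kalmancov_k, \kalmancov^{\mathrm{adv}}_k = \Theta(1/k)$ for both players, verify that the two coupled precision recursions preserve this rate, and pin down the leading constant by matching the information-accounting identity above; the fact that the adversary's action contributes only $O(1)$ total precision (independent of the assumed constant) is what makes the fixed point robust and forces the factor of $2$. Alternatively, one may appeal directly to Theorem 5 of \cite{MMST18}, which characterizes the asymptotic variance at the social learning equilibrium and yields the same doubling.
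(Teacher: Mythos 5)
Your proposal is correct and takes essentially the same route as the paper's proof: the paper writes your additive information-accounting as the two-step precision recursion (\ref{eq:combined}), in which the even-time direct observation contributes precision $1$ and the odd-time effective observation (\ref{eq:eobs}) contributes precision $1/\enemyonoisecov_k$ with $\enemyonoisecov_k = 1 + \anoisecov\,\kg_k^{-2}\bigl[1+(1-\kg_k)^2\bigr] = \Theta(k^2)$, so the adversary-derived stream contributes only $O(1)$ total and $\precision_k = k/2 + O(1)$, i.e.\ $\kalmancov_k \sim 2/k$. Your extra care about the gain rate (the self-consistent closure) is handled in the paper simply by tying $\kg_k = (1+\precision_{k-1})^{-1}$ to the common precision recursion, which is the same fixed-point argument in disguise.
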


{\em Remark}. Theorem~\ref{thm:gamecov} has two interesting consequences. \begin{compactenum}
  \item The asymptotic covariance is independent of $\anoisecov$ (i.e.  SNR of  our observation of the adversary's action)
as long as $\anoisecov > 0$. 
If $\anoisecov = 0$, there is a phase change and the covariance $\kalmancov_k = k^{-1}$
as in the standard Kalman filter.
\item 
  In the special case where  only (\ref{eq:eobs}) is observed at each time
(and the $\anoise_{k-1}$ term is omitted),
  the
 formulation reduces to the Bayesian social learning problem 
  of \cite[Chapter 3]{Cha04}. In that case, \cite[Proposition 3.1]{Cha04} shows that  $\kalmancov_k = O(k^{-1/3})$. 
\end{compactenum}
To summarize, Theorem \ref{thm:gamecov} confirms the intuition  that sequentially learning the state indirectly from the adversary's actions (and the adversary learning the state from our actions) slows down the convergence of the localization estimator.

\section{Maximum Likelihood Parameter Estimation of Adversary's Sensor}\label{sec:mle}
So far we have discussed Bayesian estimation of  the adversary's belief state.  In comparison, this section considers parameter estimation.
Our aim is to  estimate the adversary's observation kernel  $\oprob$ in (\ref{eq:model}) which  quantifies the accuracy of  the adversary's sensors.
We assume that  $\oprob$   is parametrized by an $\modeldim$-dimensional vector $\model \in \Model$ where
$\Model$ is a compact subset of $\reals^\modeldim$. Denote the parametrized
observation kernel as $\oprob^\model$.
Assume  that both us and the adversary know $\tp$ (state transition kernel\footnote{As mentioned earlier, otherwise the adversary estimates $\tp$ as $\hat{\tp}$ and we need to estimate the adversary's estimate as
  $\hat{\hat{\tp}}$. This makes the estimation task substantially more complex.}) and $\aprob$ (probabilistic map from adversary's belief to its action).
Then given our state sequence $\state_{0:\horizon}$  and adversary's action sequence $\action_{1:\horizon}$, our aim is to compute the maximum likelihood estimate (MLE) of $\model$. That is, with $\lik(\model)$ denoting the log likelihood, compute
$$ \mle = \argmax_{\model \in \Model} \lik(\model), \quad \lik(\model)= \log \pdf(\state_{0:\horizon},\action_{1:\horizon} | \model) .$$

\subsection{General Purpose Optimization for MLE}
The likelihood can be evaluated from
 the un-normalized version of the inverse filtering recursion (\ref{eq:post}) which reads
\beq
\unpostm_{k+1}(\belief) = \aprob_{\belief,\action_{k+1}}
    \,  \int_\Belief \oprob^\model_{\state_{k+1}, \obs^\model_{\belief_k,\belief}}\, \unpostm_k(\belief_k) d\belief_k
\label{eq:unpost}
\eeq
Given the un-normalized  filtering recursion (\ref{eq:unpost}), the log likelihood is  the normalization term at time $\horizon$:
\beq  \lik(\model) = \log \int_\Belief \unpostm_\horizon(\belief) d\belief. \label{eq:loglik} \eeq

Given  (\ref{eq:loglik}),
the MLE can be computed using  a general purpose numerical optimization algorithm such as  {\tt fmincon}  in Matlab. 
This uses the interior reflective Newton method \cite{CL96} for large scale problems and sequential quadratic programming 
\cite{Ber00} for medium scale problems. 
  In general $\lik(\model)$ is non-concave in $\model$ and the constraint set $\Model$ is non-convex. So the algorithm
  at best will converge to a  local 
  stationary point of the likelihood function.
 General purpose optimizers such as 
  {\tt fmincon} in Matlab  allow for the gradient  $\nablam \lik(\model)$ and Hessian
$\nablam^2 \lik(\model)$ of the likelihood  as inputs to  the algorithm.
These are typically computed via the so called sensitivity equations. For the general inverse filtering problem, the sensitivity equations are formidable.

{\em Remark. EM Algorithm}: The Expectation Maximization (EM) algorithm is a popular numerical algorithm for computing the MLE especially when the Maximization (M) step can be computed in closed form. It is widely used for computing the MLE of the parameters of a linear Gaussian state space model \cite{SS82,EK99}. Unfortunately, for estimating the adversary's observation model,  due to the time evolving dynamics in (\ref{eq:inversekfparam}), the EM algorithm is not useful
since the M-step involves a non-convex optimization that cannot be done in closed form. There is no obvious way of choosing the latent variables to avoid this non-convexity.

\subsection{Estimating Adversary's Gain Matrix  in Linear Gaussian Case}
 Consider the setup  in Sec.\ref{sec:inversekalman} where our dynamics are linear Gaussian and the adversary observes our state linearly in Gaussian noise (\ref{eq:lineargaussian}). The adversary
estimates our state using a Kalman filter, and we estimate the adversary's estimate using the inverse Kalman filter (\ref{eq:inversekf}).
Using (\ref{eq:loglik}), (\ref{eq:inversekf}), (\ref{eq:inversekfparam}),  the log likelihood  for the adversary's observation  gain matrix $\param=\hobsm$  based on our measurements is
\begin{align}
  \lik(\model) &= \text{const} - \frac{1}{2} \sum_{k=1}^\finaltime \log | \enemySig^\param_k| - \frac{1}{2} \sum_{k=1}^\finaltime
  \innovations_k^\p\, (\enemySig_k^\param)^{-1} \,\innovations_k\nonumber \\
  \innovations_k &= \action_k - \enemyobsm_k^\param\, \enemystatem_{k-1}^\param \enemystate_{k-1} - \enemyinputm_{k-1}^\param \state_{k-1} \label{eq:inversekflik}
\end{align}
where $\innovations_k$ are the innovations of the inverse Kalman filter (\ref{eq:inversekfequations}).
In (\ref{eq:inversekflik}), our state $\state_{k-1}$ is known to us and therefore is a known exogenous input. Also note from (\ref{eq:inversekfparam}) that $\enemystatem_{k}, \enemyinputm_k$
are explicit functions of $\obsm$, while $\enemyobsm_k$ and $\enemysnoisecov_k$ depend on $\obsm$ via the adversary's Kalman filter.

The following known result  summarizes the consistency  of the MLE for the for the adversary's gain matrix $\obsm$ in the linear Gaussian case.
\begin{thm}
  Assume that $\statem$ is stable, and the state space model (\ref{eq:lineargaussian})
is an identifiable minimal realization (which implies controllability and observability). Then the adversary's Kalman filter variables 
$\Sig_k,\kg_k,\kalmancov_k$ converge to steady state values geometrically fast in  $k$  \cite{AM79} implying that asymptotically the system (\ref{eq:inversekf}) is stable linear time invariant. Also, the  MLE $\mle$ (maximizer for (\ref{eq:inversekflik}))  for the adversary's observation matrix $\obsm$ using (\ref{eq:inversekf})  is unique and a strongly consistent estimator \cite{Cai88}.
\end{thm}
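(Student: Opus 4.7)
The plan is to prove this in two stages, exactly mirroring the structure of the statement: first establish the steady-state behaviour of the adversary's Kalman filter, then lift this to consistency of the MLE for $\obsm$ via the inverse filter. I will quote \cite{AM79} and \cite{Cai88} for the heavy machinery but will trace through how their hypotheses are verified in the present inverse-filtering setting.

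First, I would handle the convergence of $\kalmancov_k, \Sig_k, \kg_k$. Under the stated assumptions (stability of $\statem$, together with minimality, which gives $(\statem,\obsm)$ observable and $(\statem,\snoisecov^{1/2})$ controllable), the discrete algebraic Riccati equation associated with (\ref{eq:kalman}) admits a unique stabilizing solution $\kalmancov_\infty$, and for any initial $\kalmancov_0 \Gr 0$ we have $\|\kalmancov_k-\kalmancov_\infty\| = O(\rho^k)$ for some $\rho \in (0,1)$; this is precisely the standard Kalman filter convergence result in \cite{AM79}. Geometric convergence of $\kalmancov_k$ immediately forces geometric convergence of $\Sig_k$ and $\kg_k$ through their algebraic definitions in (\ref{eq:kalman}). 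Substituting the limits into (\ref{eq:inversekfparam}) yields limiting matrices $\enemystatem_\infty, \enemyinputm_\infty, \enemyobsm_\infty, \enemysnoisecov_\infty$, with $\enemystatem_\infty = (I-\kg_\infty \obsm)\statem$ stable (since $\kg_\infty$ is the stabilizing gain). Hence the inverse system (\ref{eq:inversekf}) is an exponentially small perturbation of a stable linear time-invariant state space model driven by the known exogenous input $\state_k$ and the observation noise $\anoise_k$.

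Second, I would apply the MLE consistency results for prediction-error estimators in Gaussian linear systems due to Caines \cite{Cai88}. Viewed as a function of $\model = \obsm$, the inverse Kalman filter recursion (\ref{eq:inversekfequations}) produces innovations $\innovations_k(\model)$ that, by the stability established above, satisfy a uniform exponential forgetting of initial conditions; the log likelihood (\ref{eq:inversekflik}) can therefore be written as an average of stationary ergodic quantities plus an $o(1)$ term as $\horizon \to \infty$. Ergodicity of the asymptotic innovations process, together with the quadratic form of the Gaussian likelihood, gives a.s.\ uniform convergence of $\horizon^{-1}\lik(\model)$ to a deterministic limit $\bar{\lik}(\model)$. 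Identifiability of the minimal realization, combined with the fact that distinct $\obsm$ produce distinct second-order statistics of the action process $\{\action_k\}$, guarantees that $\bar{\lik}$ has a unique maximizer at the true parameter $\obsm^*$. Strong consistency of $\mle$ then follows from the standard argmax-continuous-mapping argument.

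The main obstacle I expect is the identifiability step, not the ergodicity or the Riccati convergence. The dependence of $\enemystatem_k, \enemyobsm_k, \enemyinputm_k, \enemysnoisecov_k$ on $\obsm$ is highly nonlinear---$\obsm$ enters through the steady-state Riccati solution---so two different candidate matrices could in principle induce the same closed-loop innovations spectrum. To close this gap I would argue that, because $\state_k$ is a known persistently exciting input and appears in the observation equation with coefficient $\enemyobsm_\infty \enemyinputm_\infty = \fun(\kalmancov_\infty)\kg_\infty\obsm$, the map $\obsm \mapsto (\enemystatem_\infty,\enemyobsm_\infty\enemyinputm_\infty,\enemyobsm_\infty,\enemysnoisecov_\infty)$ is injective on the identifiable parameter set, so that \cite[Thm.~on ML identifiability]{Cai88} applies. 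Uniqueness of the maximizer for finite $\horizon$ then follows from strict concavity of the Gaussian log likelihood in the innovations representation near $\obsm^*$, which together with a.s.\ convergence gives strong consistency.
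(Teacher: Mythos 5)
The paper provides no standalone proof of this theorem: it is stated as a ``known result'' and the entire argument is the pair of citations embedded in the statement itself ([AM79] for geometric convergence of the Riccati/Kalman quantities, [Cai88] for consistency of the Gaussian prediction-error MLE). Your proposal follows exactly that route and fleshes out how the hypotheses of those results would be verified, so in structure it matches the paper.

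Two remarks on the details you add. First, you are right that the genuinely delicate step is injectivity of the map $\obsm \mapsto (\enemystatem_\infty,\enemyinputm_\infty,\enemyobsm_\infty,\enemysnoisecov_\infty)$, since $\obsm$ enters the inverse system only through the steady-state Riccati solution; the paper sidesteps this entirely by folding ``identifiable'' into the hypothesis on the \emph{forward} model (\ref{eq:lineargaussian}), which does not by itself give identifiability of $\obsm$ from the action process $\{\action_k\}$. You name this gap honestly but do not close it --- your appeal to the known exogenous input $\state_k$ being persistently exciting is the right idea, but as written it is an assertion, not an argument. Second, your claim that uniqueness of the maximizer ``for finite $\horizon$'' follows from strict concavity of the Gaussian log likelihood near $\obsm^*$ overreaches: local strict concavity near the truth gives at most a locally unique stationary point, and the paper itself states in Sec.~\ref{sec:mle} that $\lik(\model)$ is non-concave in general (and Figure~\ref{fig:kf} shows the inverse-filter likelihood is nearly flat). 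The uniqueness asserted in the theorem should be read as uniqueness of the maximizer of the asymptotic averaged likelihood $\bar{\lik}$, which is what [Cai88] delivers under identifiability, not finite-sample uniqueness.
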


{\bf Numerical Examples}.
To provide insight,  Figure \ref{fig:kf} displays  the log-likelihood versus adversary's gain matrix $\hobsm$  using (\ref{eq:inversekflik}) in the scalar case.
The parameters in the simulation are $\statem = 0.4$, $\snoisecov = 2$,
$\onoisecov = 1$, $ \anoisecov= 1$ in (\ref{eq:linearaction}) with $\finaltime = 1000$. The four sub-figures correspond to true values of $\obsmt = 0.5,1,5,2,3$ respectively.

Each sub-figure in  Figure \ref{fig:kf} has two plots. The plot in red is  the  log-likelihood of $\hobsm \in (0,10]$ evaluated based on the adversary's observations using the  standard  Kalman filter. (This is the classical log-likelihood of the observation gain of a Gaussian state space model.)  The plot in blue is the log-likelihood of $\hobsm\in (0,10]$ computed using (\ref{eq:inversekflik}) based on  our measurements of the adversary's action using the  inverse Kalman filter (where the adversary first estimates our state using a Kalman filter) - we call this the inverse case.

\begin{figure}
  \begin{overpic}[scale=0.12,unit=1mm]{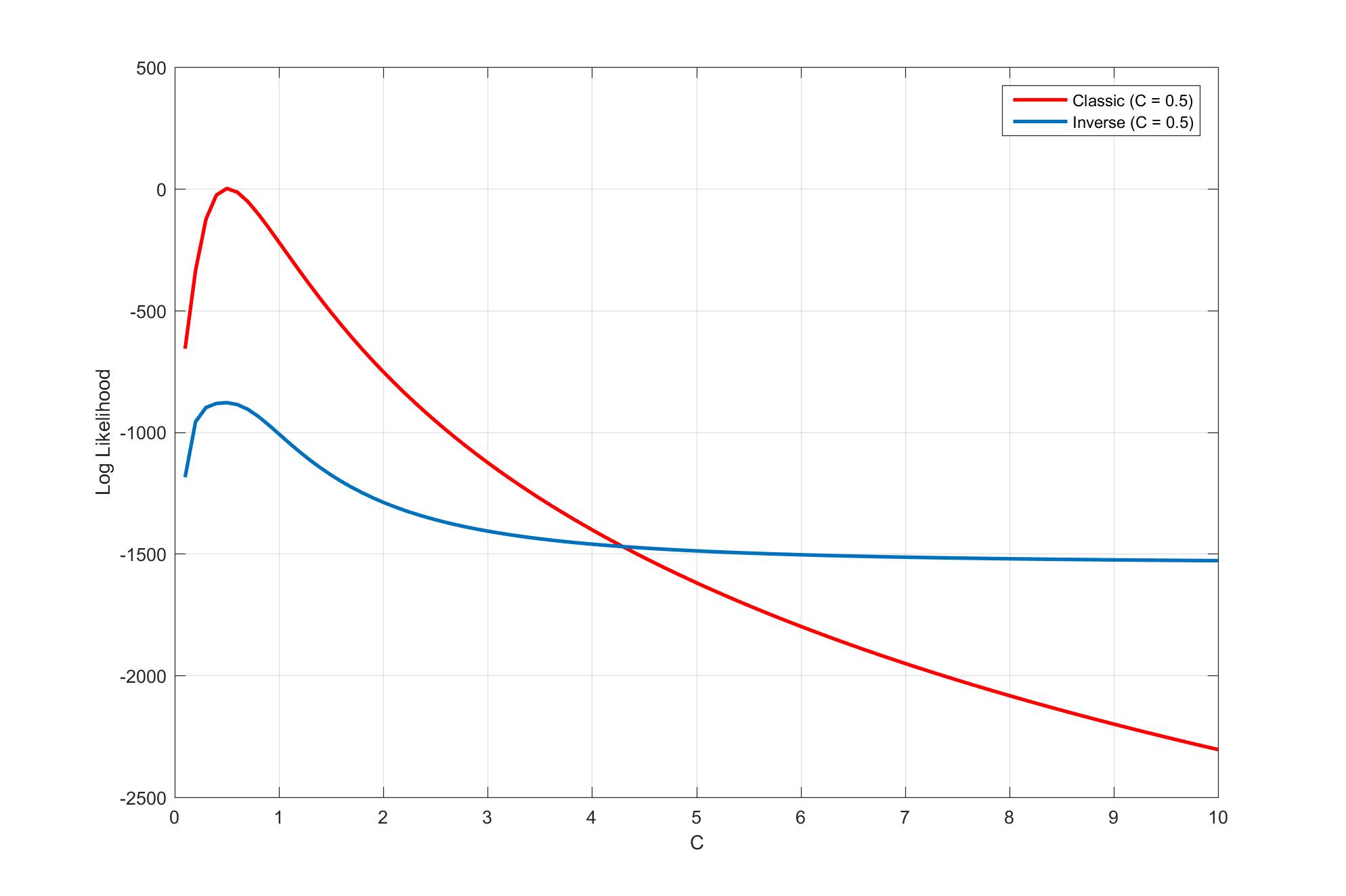}
    \put(50.15,2.9){\colorbox{white}{$\hobsm$}}
    \put(4,28){\rotatebox{90}{\colorbox{white}{\small Log-likelihood}}}
      \put(73,56.5){\colorbox{white}{$\obsmt=0.5$}}
  \end{overpic}
  \begin{overpic}[scale=0.12,unit=1mm]{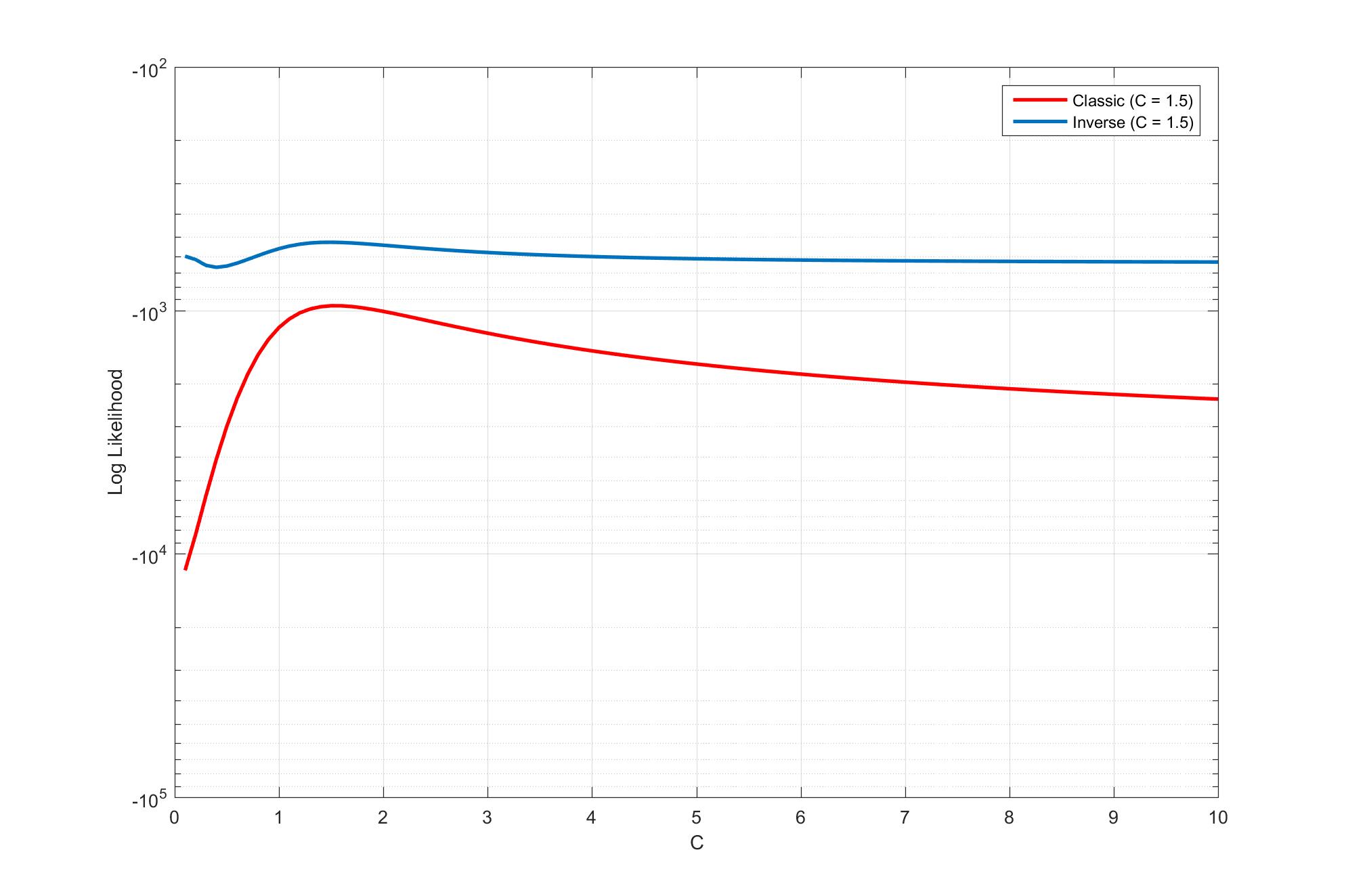}
    \put(50.15,2.9){\colorbox{white}{$\hobsm$}}
    \put(4,28){\rotatebox{90}{\colorbox{white}{\small Log-likelihood}}}
     \put(73,56.5){\colorbox{white}{$\obsmt=1.5$}}
  \end{overpic}
  \begin{overpic}[scale=0.12,unit=1mm]{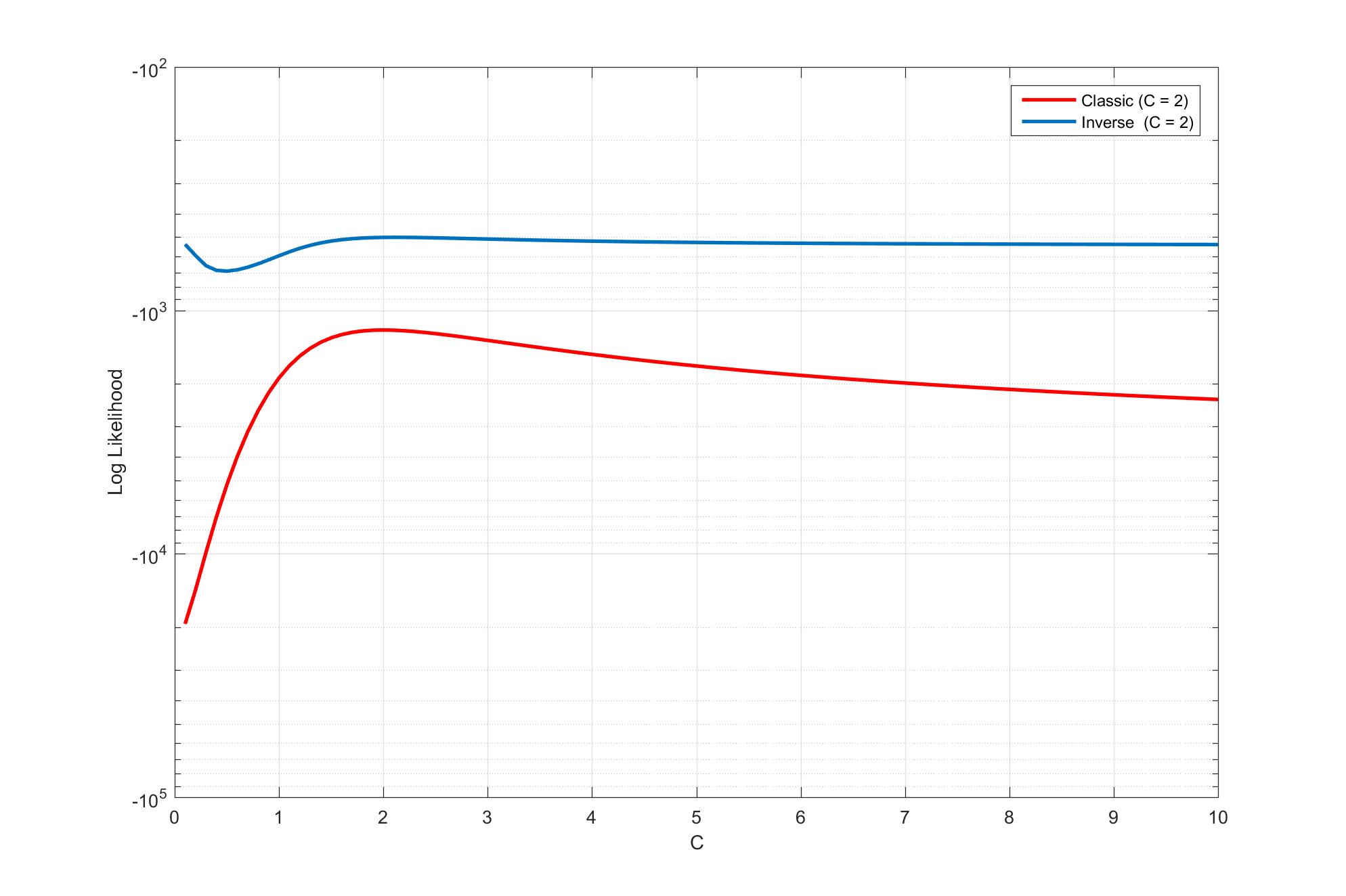}
    \put(50.15,2.9){\colorbox{white}{$\hobsm$}}
    \put(4,28){\rotatebox{90}{\colorbox{white}{\small Log-likelihood}}}
     \put(75.5,56.5){\colorbox{white}{$\obsmt=2$}}
  \end{overpic}
  \begin{overpic}[scale=0.12,unit=1mm]{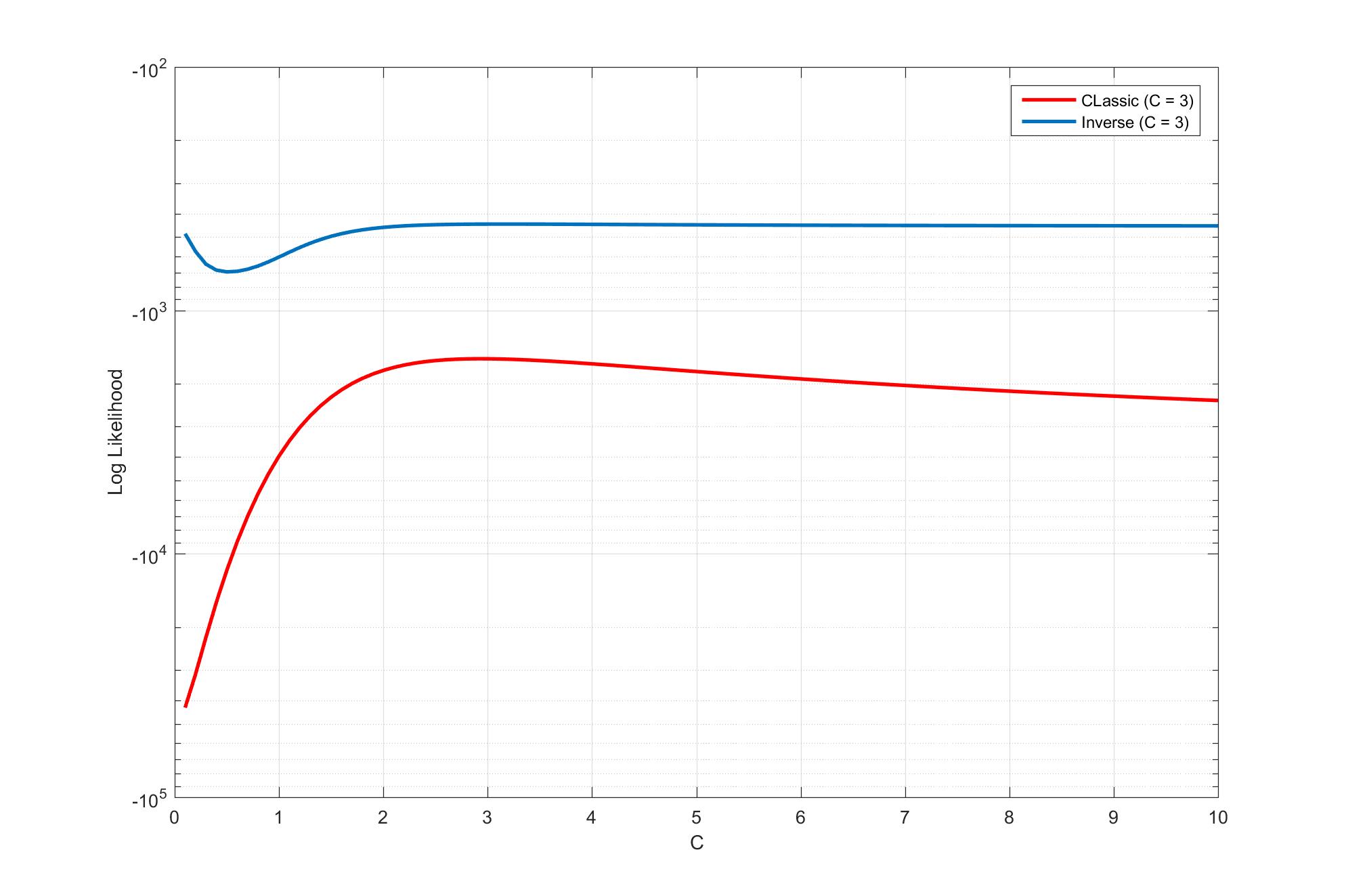}
    \put(50.15,2.9){\colorbox{white}{$\hobsm$}}
    \put(4,28){\rotatebox{90}{\colorbox{white}{\small Log-likelihood}}}
     \put(75.5,56.5){\colorbox{white}{$\obsmt=3$}}
  \end{overpic}
\caption{Log-Likelihood as a function of enemy's gain $\hobsm\in (0,10]$ when true value is $\obsmt$. The red curves denote the log-likelihood of $\hobsm$ given the enemy's measurements. The blue curves denotes the log-likelihood of $\hobsm$ using the inverse Kalman filter given our observations of the enemy's action. The plots show that it is more difficult to compute the MLE for the inverse filtering problem due to the almost flat likelihood (blue curves) compared to red curves.}
\label{fig:kf}
\end{figure}
 Figure \ref{fig:kf} shows  that the log likelihood in the inverse case (blue plots) has a less pronounced maximum compared to the standard case (red plots). This implies that numerical algorithms for computing the MLE of the  enemy's gain using our observations of the adversary's actions (via the inverse Kalman filter) will converge much more slowly than the classical MLE (which uses  the adversary's observations). This is intuitive  -  our estimate of the adversary's parameter is based on the adversary's estimate of our state - so there is more noise to contend with.

{\bf Cramer Rao (CR) bounds}. Is it instructive to compare the   CR bounds for MLE of $\obsm$ for the classic model versus that of the inverse Kalman filters model. Table \ref{tab:cr} displays the CR bounds (reciprocal of expected Fisher information) for the four examples considered above evaluated using via the algorithm in  \cite{CS96}. 
It shows that the covariance lower bound for the inverse case is substantially higher than that for the classic case. This is again
consistent with the intuition that estimating the adversary's parameter based on its actions (which is based on its estimate of us) is more difficult than directly estimating $\obsm$ in a classical state space model.

\begin{table} \centering
\begin{tabular}{|c|c|c|}
  \hline
  $\obsmt$ & Classic & Inverse \\
  \hline
  0.5 & $0.24 \times 10^{-3} $ & $5.3 \times 10^{-3}$ \\
  1.5 & $1.2 \times 10^{-3}$ &   $37 \times 10^{-3}$ \\
  2    & $2.1 \times 10^{-3} $ &  $70 \times 10^{-3}$ \\
  3  & $ 4.6 \times 10^{-3}$ & $ 336 \times 10^{-3}$\\
  \hline                       
\end{tabular}
\caption{Comparison of Cramer Rao bounds for $\obsm$ - classical state space model vs inverse Kalman filter model}
\label{tab:cr}
\end{table}

\section{Optimal Probing of Adversary}
Thus far we have  discussed estimating the adversary's  belief state and observation  matrix. This section deals with optimal probing also known  as input design.
 Suppose  our probe signal  $\state_k$ is a finite state Markov chain with transition matrix
 $\tp$. Recall from (\ref{eq:model}) that  the adversary observes $\state_k$ via observation probabilities  $\oprob$ and deploys a Bayesian filter (involving $\tp,\oprob$) to compute its belief  $\belief_k$, then chooses action $\laction_k$ that we observe as $\action_k$. This section addresses the question:  {\em How should we choose the transition matrix  $\tp$ of our probe signal $\state_k$  to minimize the covariance
 of our estimate of the adversary's observation kernel $\oprob$?}

Below we consider two approaches. The first  approach is analytical -- using stochastic dominance, we establish a partial ordering amongst probe transition  matrices which results in a  corresponding ordering of our SNR of the adversary's actions. 
The second approach is  numerical -- we describe a stochastic gradient algorithm to optimize the probe transition matrix.

\subsection{Stochastic Dominance Ordering of Optimal Probing Transition Matrices} In this subsection, we construct a partial ordering for transition  matrices $\tp$ of our probe signal $\state_k$ which results in a corresponding  ordering of our SNR of the adversary's action. The result is useful since  computing the optimal $\tp$ is non-trivial (see next subsection), yet we can  compare  the SNRs to two different probing signals   without brute force computations.  Note that maximizing this SNR
 can be viewed as a surrogate for minimizing the covariance of our estimate of the adversary's observation kernel~$\oprob$.

 \subsubsection{Assumptions}
We  assume (these assumptions are discussed below)
\begin{compactenum}
\item[(A1)] $\tp$ is totally positive of order 2, see \cite{KR80,Kri16}. That is every second order minor of $\tp$  is non-negative.
\item[(A2)] The adversary chooses action $\laction_k = \fun(\levels^\p \belief_k) $ where $\levels $  is a  vector with increasing elements and $\fun$ is any increasing non-negative function (e.g.\  quantizer).
 Recall the adversary observes $\state_k$ as $\obs_k$ and computes its Bayesian belief $\belief_k$ via~(\ref{eq:belief}) using the transition matrix $\tp$.
\end{compactenum}

  Given our observation $\action_k = \laction_k +\anoise_k$ of the adversary's action $\laction_k$, define the signal to noise ratio
  $$\enemySNR(\tp) = \frac{\E\{\laction_k^2\}}{\anoisecov}.$$

  \subsubsection{Copositive  Dominance and Monotone Likelihood  Ratio Dominance}
  In order to compare  $\enemySNR(\tp)$ for different probing  transition matrices $\tp$, we introduce two definitions: copositive dominance of transition  matrices and monotone likelihood ratio dominance of beliefs.

Copositivity generalizes positive semi-definiteness. Recall a matrix $\copomat$ is positive semi-definite if $\belief^\p \copomat \belief \geq 0$ for  all $\belief \in \reals^{\statedim}$. In comparison, a matrix $\copomat$ is copositive if $\belief^\p \copomat \belief \geq 0$ for all $\statedim$-dimensional probability  vectors $\belief$.

  Given two transition matrices $\tpi$ and $\tpii$, define the matrices
  $\copomat^j$, $j = 1,\ldots
\statedim-1$ each of dimension $\statedim \times \statedim$ as
\beq
\begin{split}
   \copomat^{j} &= \cfrac{1}{2}\left[\gamma^{j}_{mn} + \gamma^{j}_{nm}\right]_{\statedim\times\statedim},
\\
\gamma^{j}_{mn} &=  \tp_{m,j}(1)\tp_{n,j+1}(2) - \tp_{m,j+1}(1)\tp_{n,j}(2) .
\end{split} \eeq

\begin{definition}[Copositive Ordering of Transition Matrices \cite{Kri16}]  \label{def:lR}
Given two  transition matrices $\tpi$ and $\tpii$, we say that   $\tpi$ is  copositively dominated by  $\tpii$, denoted as 
$ \tpi \lR \tpii  $ if
  each matrix  $\copomat^{j,\action}$, $j=1\,\ldots,\statedim-1$  is copositive, i.e., 
$$
  \belief^\p  \copomat^{j,\action} \belief  \geq 0 ,  \quad \forall \belief \in \Belief .
                                             $$
\end{definition}

Next we define the monotone likelihood ratio partial order for probability  mass functions.

\begin{definition}[Monotone Likelihood Ratio (MLR) Dominance]  \label{def:mlr}
Let $\belief_1, \belief_2 \in \Belief$ be any two probability vectors.
Then $\belief_1$ is greater than $\belief_2$ with respect to the MLR ordering -- denoted as
$\belief_1 \gr \belief_2$ -- if
\beq \belief_1(i) \belief_2(j) \leq \belief_2(i) \belief_1(j), \quad i < j, \quad i,j\in \{1,\ldots,\statedim\}.
\label{eq:mlrorder}\eeq 
\end{definition}
Similarly $\belief_1 \lr \belief_2$ if  $\leq$ in (\ref{eq:mlrorder}) is replaced
by a $\geq$. \\
The MLR stochastic order is  useful  in a Bayesian context since it is closed  under conditional expectations.
That is, $X\gr Y$  implies $\E\{X|\F\} \gr \E\{Y|\F\}$ for any two random variables $X,Y$ and sigma-algebra $\F$
\cite{MS02}.

{\em Discussion of Assumptions}: (A1) is widely used in Markov chain structural analysis; see \cite{Kri16} for specific examples and also the classic paper \cite{KR80}. (A2) says that  larger  beliefs  (wrt MLR order)
of the adversary correspond to larger actions taken by the adversary. This implies that
$\laction$ is increasing with our state $\state$. For example, if $\state$ denote the threat levels perceived by the adversary, then the larger our state, the larger the adversary's action.

\subsubsection{Main Result}
With the above definitions, we are now ready to state the main result of this subsection.

As  shown in \cite{Kri16}, copositive dominance is a sufficient condition for the one step Bayesian  belief  updates (\ref{eq:belief})  using $\tpi$ and $\tpii$, respectively,  to satisfy $\filter(\belief,\obs;\tpi) \lr \filter(\belief,\obs,\tpii)$ where $\lr$ denotes likelihood  ratio dominance. Assumption (A1) globalizes this assertion and implies that adversary's beliefs for transition matrices $\tpi$ and $\tpii$ satisfy $\belieftpi_k \lr \belieftpii_k$ for all time $k$. This implies that $\E\{\laction_k^2\}$ using $\tpi$ is smaller than that using $\tpii$. 
We summarize this argument as the following  result.

\begin{thm} \label{thm:snr}
  Suppose transition matrices $\tpi$, $\tpii$ satisfy (A1), (A2)   and $\tpi \lR  \tpii$ (copositive ordering). Then
  \begin{compactenum}
    \item The belief updates using transition matrices $\tpi$ and $\tpii$ satisfy: $\belieftpi_k \lr \belieftpii_k$ for all time $k$.
    \item 
      Therefore, our probe signal $\{\state_k\}$  generated with transition matrix $\tpi$ incurs a smaller signal to noise ratio compared  to probe signal with transition matrix $\tpii$, that is,
        $\enemySNR(\tpi) \leq \enemySNR(\tpii)$.
\end{compactenum}
\end{thm}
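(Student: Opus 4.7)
The plan is to prove part (1) by induction on $k$ and then derive part (2) from the first-order stochastic dominance that MLR ordering entails.

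For the induction, the base case is trivial since $\belieftpi_0=\belieftpii_0=\belief_0$. For the inductive step, assume $\belieftpi_k\lr\belieftpii_k$ (viewed as functions of a common observation path) and compare the two one-step updates by inserting the intermediate belief $\filter(\belieftpii_k,\obs_{k+1};\tpi)$:
\[
\belieftpi_{k+1}=\filter(\belieftpi_k,\obs_{k+1};\tpi)\;\lr\;\filter(\belieftpii_k,\obs_{k+1};\tpi)\;\lr\;\filter(\belieftpii_k,\obs_{k+1};\tpii)=\belieftpii_{k+1}.
\]
The first inequality is MLR-monotonicity of the Bayesian filter in its prior, which holds under (A1) because TP2 is preserved under multiplication by a TP2 kernel (\cite{KR80,Kri16}): with the same $\tpi$ and the same observation $\obs_{k+1}$, an MLR-ordered pair of priors produces an MLR-ordered pair of posteriors. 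The second inequality is exactly the one-step copositive-dominance statement recalled from \cite{Kri16} immediately before the theorem, applied with common prior $\belieftpii_k$ and common observation $\obs_{k+1}$.

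For part (2), MLR dominance implies first-order stochastic dominance (\cite{MS02}), so for every increasing $h:\states\to\reals$,
\[
\sum_{i\in\states} h(i)\,\belieftpi_k(i)\;\leq\;\sum_{i\in\states} h(i)\,\belieftpii_k(i).
\]
Taking $h=\levels$, increasing by (A2), gives $\levels^\p\belieftpi_k\leq\levels^\p\belieftpii_k$. The increasing non-negative function $\fun$ preserves this inequality, so $0\leq\laction_k^{(1)}\leq\laction_k^{(2)}$ pathwise, and hence $(\laction_k^{(1)})^2\leq(\laction_k^{(2)})^2$ pathwise. Dividing by $\anoisecov$ and taking expectations then yields $\enemySNR(\tpi)\leq\enemySNR(\tpii)$.

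The point I expect to be the main obstacle, and the one deserving careful bookkeeping, is that the joint law of $(\state_{0:k},\obs_{1:k})$ itself depends on which of $\tpi,\tpii$ is in force, so the word ``pathwise'' above must be interpreted on a common coupling. The clean resolution is to observe that the induction really establishes the deterministic statement $\belieftpi_k(\obs_{1:k})\lr\belieftpii_k(\obs_{1:k})$ as functions of a fixed argument $\obs_{1:k}$ (the filter recursion depends only on the kernels, not on the law of the data), and then to invoke (A1) a second time: TP2 closure together with $\tpi\lR\tpii$ makes the law of $\obs_{1:k}$ under $\tpii$ first-order stochastically larger than under $\tpi$, so that the change of measure between the two SNR integrals preserves the pathwise inequality between the two monotone functionals $(\laction_k^{(j)})^2$.
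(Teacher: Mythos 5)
Your main argument is essentially the paper's own proof. The paper establishes part (1) by citing \cite[Theorem 10.6.1]{Kri16}, whose content is exactly your two-step induction: the one-step copositive-dominance comparison for a common prior and common observation, composed with MLR-monotonicity of the Bayesian filter in its prior (which holds under (A1) because the prediction step $\tp^\p\belief$ preserves the MLR order when $\tp$ is TP2, and the Bayes update with a fixed likelihood vector preserves it trivially). Part (2) is derived identically: MLR implies first-order stochastic dominance, $\levels^\p\belieftpi_k\leq\levels^\p\belieftpii_k$, and monotonicity of $\fun$.

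Your closing paragraph is the one place where you go beyond the paper, and it identifies a real issue rather than a spurious one. The pathwise inequality $\laction_k^{(1)}(\obs_{1:k})\leq\laction_k^{(2)}(\obs_{1:k})$ is a statement about a fixed observation string, whereas $\enemySNR(\tpi)$ and $\enemySNR(\tpii)$ integrate against two \emph{different} laws of $(\state_{0:k},\obs_{1:k})$; the paper passes from the pathwise dominance to the ordering of the two expectations without comment. Your proposed repair --- stochastic ordering of the two observation laws together with monotonicity of $(\laction_k^{(2)})^2$ in the path --- is the natural route, but as sketched it requires the observation kernel $\oprob$ to be TP2 (so that the belief, hence the action, is MLR-increasing in each observation, and so that stochastically larger states induce stochastically larger observations); that is not among the stated assumptions (A1)--(A2). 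So you have correctly located the step that needs an additional hypothesis, and your fix goes through once TP2 of $\oprob$ is added.
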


To put the above theorem in context, note  that evaluating $\enemySNR(\tp)$ is non-trivial due to multiple levels of dynamics and dependencies:
$\tp$ determines our 
probe signal; then the adversary observes our probe signal in noise, computes the posterior (using the Bayesian  filter (\ref{eq:belief}) with transition matrix $\tp$) and then responds with an action $\laction_k$ generated according to (A2) which we observe in noise.
Despite this complexity, the theorem says that  we can impose a partial order on the probe transition matrices $\tp$ resulting in a corresponding
ordering of the SNR of our observation of the adversary's action (which is a surrogate for the covariance of our estimate of $\oprob$).

\subsubsection{Examples}
We start with   simple 2-state examples    of (A1) and copositive dominance $\tpi \lR \tpii$ for illustrative purposes; see \cite{Kri16} for several high dimensional examples:
    \begin{equation*}
      \begin{split}
 (i):  \tpi &= \begin{bmatrix}  0.8 & 0.2 \\ 0.7 & 0.3
    \end{bmatrix}, \quad \tpii = \begin{bmatrix} 0.2 & 0.8 \\ 0.1 & 0.9 
    \end{bmatrix} \\
 (ii):   \tpi &= \begin{bmatrix} 1 & 0 \\ 1 - p & p 
    \end{bmatrix}, \quad
    \tpii = \begin{bmatrix} 1 & 0 \\ 1 - q & q 
    \end{bmatrix}, \quad q > p
  \end{split}
\end{equation*}
For examples (i) and (ii) above,  (A1) and copositive dominance holds; so if (A2) also holds then Theorem~\ref{thm:snr} holds.

We now illustrate Theorem~\ref{thm:snr} with   higher dimensional examples.

{\em Example 1. Optimizing Probe Signal with Stopping Probability  Constraints}.
Let state 1 denote the stopping (absorbing state). The transition matrix of our probe signal is of the form
$$\tp_{i1}(\param) = 1 - (\statedim-1) \param_i, \;\tp_{ij}(\param) = \param_i, \;  \tp_{11}(\param) = 1.$$
where $\param_i \in [0,1/(\statedim-1)]$, $i=2,\ldots,\statedim$ so that $\tp(\param)$ is a valid stochastic matrix. Then (A1) holds.
Also $\tp(\param) \lR \tp(\bar{\param})$ for $\param < \bar{\param}$ elementwise.

If state 1 is viewed as sending no probe signal, then the probe signal stops
(to maintain a low probability of intercept constraint) at a random time whose 
expected value 
  is determined by $\param_i$. The larger $\param_i$ is chosen, the higher the SNR of the adversary.

{\em Example 2. Monotone Likelihood Ratio (MLR)  ordered probability vectors}. Suppose we have $L$ probability vectors $v_i$ each of dimension $\statedim$ that are MLR ordered (see Defn.\ \ref{def:mlr}), i.e.  $v_1\lr v_2 \cdots \lr v_L$.  We need to choose $\statedim$ vectors out of these $L$ vectors to construct the transition  matrix of our probe signal. Then the optimal probability vectors  to choose for the transition matrix are $\tpii  = \begin{bmatrix}  v_{L-\statedim+1},\cdots,v_{L-1},v_L \end{bmatrix}^\p$. It can be shown that (A1) holds and $\tpii$ copositively dominates any other  transition matrix $\tpi$ constructed from  vectors $v_1,\ldots, v_{L- \statedim}$. Thus Theorem
\ref{thm:snr} implies that $\tpii$ yields the probe signal with the optimal SNR.

In summary, this subsection presented a partially ordering for the transition  matrices of probe signals which resulted in a corresponding ordering
of SNRs (surrogate for the covariance of our estimate of the adversary's sensor gain).

\subsection{Stochastic Gradient Algorithm for Optimizing Probe}
The previous subsection presented an analytical framework for characterizing the probe signals. Here
 we present a numerical stochastic gradient algorithm to  solve the following stochastic optimization problem:
Given our state sequence $\state_{0:\finaltime}$ and observations of the adversary's actions $\action_{1:\finaltime}$, find  the  transition matrix $\tpopt$ of our probe signal that minimizes the variance  of our estimate of the adversary's likelihood:
\beq \tpopt = \argmin_{\tp} \objective(\tp) \ole \E_{\state_{0:\finaltime},\action_{1:\finaltime}}\{ \var(\hat{\oprob}) \}
\label{eq:covlikelihood} \eeq
Here $\hat{\oprob}$ is our MLE of $\oprob$ given $\action_{1:\finaltime}$ computed using the methods in Sec.\ref{sec:mle},
and $\var(\hat{\oprob})$ denotes the empirical variance.

Note that  (\ref{eq:covlikelihood}) is a constrained optimization problem
since $\tp$ needs to be a   valid stochastic matrix. We use  the following trick to re-parametize the problem as an unconstrained
optimization. We parametrize
$\tp$  using spherical coordinates: Let $\param_{ij} \in \reals$ for
$i\in \states$, $j=\{1,\ldots,\statedim-1\}$ and define the parametrized transition  matrix  $\tp(\param)$ as
\begin{equation}
 \begin{split} \tp_{i1}(\param) = \cos^2\param_{i1}, \quad
   \tp_{i\statedim} (\param)= \sin^2 \param_{i,\statedim-1}
   \prod_{l=1}^{\statedim-2} \sin^2 \param_{il} \\
   \tp_{ij}(\param) = \cos^2\param_{ij} \, \prod_{l=1}^{j-1} \sin^2 \param_{il} , \;j = 2,\ldots,\statedim-1 
 \end{split} \label{eq:spherical}
\end{equation}
Note that even though $\param$ is unconstrained,  is straightforwardly  verified that $\tp(\param)$ is a valid stochastic  matrix.\footnote{This approach is based on elementary differential geometry and is equivalent to constraining the derivative to a manifold that ensures the stochastic gradient update yields a valid transition matrix \cite{HM12}. For $\statedim=2$, the parametrization yields $\begin{bmatrix} \cos^2 \param_{11} & \sin^2 \param_{11} \\  \cos^2\param_{21}  & \sin^2 \param_{21} 
  \end{bmatrix}$ which is always a valid transition matrix. Another possibility is to use logistic functions.}

The stochastic optimization (\ref{eq:covlikelihood}) can be solved using a    stochastic gradient algorithm. The key issue is to estimate
the  gradient of the covariance of the MLE $\oprob^*$ wrt $\param$. 
Algorithm \ref{alg:spsa} 
uses one possible stochastic gradient algorithm, namely,  the Simultaneous Perturbation Stochastic Approximation (SPSA) algorithm
\cite{Spa03}
to
generate a sequence of estimates $\param^\iter$,
 $\iterI=1,2,\ldots,$ that
 converges to a local stationary point of the objective $\objective(\tp)$ defined in
 (\ref{eq:covlikelihood}).

\begin{algorithm}[h]
\caption{SPSA  Algorithm for estimating optimal probe transition matrix $\tp^*$ parametrized by $\param$ in (\ref{eq:spherical})} \label{alg1}

Step 1: Choose initial parameter     $\param^{(0)}$.
\\
Step 2: For iterations $\iterI=0,1,2,\ldots$ 
\begin{itemize}
  \item Simulate system (\ref{eq:model}) and generate action sequence $\action_{1:\finaltime}^\iter = (\action_1,\ldots,\action_\finaltime)^\iter$.
\item Evaluate  sample variance $\objective^\iter $ of MLE of $\oprob$  using
 $\action_{1:\finaltime}^\iter$.
\item Compute gradient estimate  
  \begin{equation*}
    \begin{split}
  \nablap \objective^\iter &= \frac{
\objective^\iter(\param^\iter+\Delta_\iterI \direction_\iterI) - \objective^\iter(\param^\iter-\Delta_\iterI \direction_\iterI)}{\displaystyle 2
\Delta_\iterI } \direction_\iterI, \\
\direction_\iterI(i) &= \begin{cases}
-1 & \text{ with probability } 0.5 \\
+1 & \text{with probability } 0.5.\end{cases}.
\end{split}
\end{equation*}
 Here $\Delta_n= \Delta/(n+1)^\gamma$ denotes the gradient step size with $0.5 \leq \gamma \leq 1$ and $\Delta > 0$.

\item Update estimate via 
stochastic gradient  algorithm
\begin{equation} \label{eq:sa}
\begin{split}
\param^\iterplus &= \param^\iter - \epsilon_{n+1} \nablap  {J}^\iter(\param^\iter), \\
\epsilon_n &= \epsilon/(n+1+s)^\zeta, \quad \zeta \in (0.5,1], \;
\epsilon , s > 0. 
\end{split}
\end{equation}
\end{itemize} \label{alg:spsa}
\end{algorithm}

The  SPSA Algorithm~\ref{alg:spsa} \cite{Spa03} 
picks a single
random direction $\direction_n$ along which direction the derivative is
evaluated at each batch $n$.  
To evaluate the gradient estimate $\nablap J^\iter$ in (\ref{eq:sa}), SPSA
requires only 2 batch simulations, i.e., the number of evaluations is
independent of dimension of  parameter  $\param$.
Algorithm \ref{alg1} converges
to a local stationary  point optimum of objective $\objective(\tp)$
(defined in (\ref{eq:covlikelihood})) with probability one. This is a straightforward application of techniques in  \cite{KY03}  (which gives general convergence methods
for Markovian dependencies).

\section{Conclusions}
This paper is an early  step towards understanding
adversarial signal processing problems in counter-autonomous systems.
We presented four main results:
First, we formulated the problem of estimating the  adversary's filtered density (posterior) given noisy measurements (adversary's actions), when the adversary itself runs a Bayesian filter to estimate our state.  Several such inverse filtering algorithms were proposed for various signal models.
Second, we discussed sequential localization as a game between us and the adversary.  The convergence of the inverse filter is slower compared to the adversary's Bayesian filter. An interesting result was the  phase transition in the asymptotic covariance. 
Third, 
we discussed how to compute the maximum likelihood estimate of the adversary's observation matrix using the inverse filter. This allows us to calibrate the accuracy of the adversary's sensors and therefore estimate the adversary's sensor capability. 
Finally, we discussed how to adapt (control)  our state   to minimize the covariance of the estimate of the adversary's observation likelihood. ``Our'' state can be viewed as a probe signal which causes the adversary to act; so choosing the optimal state sequence is an input design problem.

Estimating the posterior distribution and sensor gain of the adversary is a difficult problem. In general identifiability issues can arise for estimating the adversary's sensor gain. Also geometric ergodicity (stability) of the inverse filter, that is, the inverse filter forgets its initial condition geometrically fast, is difficult to establish.
There are several interesting extensions that can be considered in future work.
One area is to study  the convergence properties of the maximum likelihood estimate to the CRB for the inverse filtering case, i.e.,  sample complexity results for the error variance to  converge to the CRB. Another extension is to estimate  the utility function of the adversary based on its response, using for example, revealed preferences from microeconomics. Finally, a careful dynamic game theoretic formulation of counter-autonomous systems is of interest, where the specific forward and backward physical channel responses are considered.

{\bf Acknowledgment}.  Cornell PhD student Kunal Pattanayak did  the simulations for  generating Figure \ref{fig:kf} and Table \ref{tab:cr}. We thank
Dr.\ Erik Blasch for several technical discussions. This work extends our previous papers to a Bayesian framework; we are grateful to Robert Mattila,
 Cristian Rojas and Bo Wahlberg at KTH for preliminary discussions.

\appendix

              \section*{Proof of Theorem \ref{thm:gamecov}}
              
Our update dynamics via the  Kalman filter are:
\begin{compactitem}
\item At odd time instants, using effective observation  $\eobs_k$ (from the adversary)
our  state estimate and covariance evolves as
  \begin{align}
 \hstate_k &= (1 - \enemykg_k) \hstate_{k-1} + \enemykg_k \eobs_k
             \nonumber \\
 \kalmancov_k^{-1}&= \kalmancov_{k-1}^{-1} + \enemyonoisecov_{k}^{-1}\nonumber \\
       \text{ where } &  \enemykg_k = \frac{\kalmancov_{k-1}}
                        {\kalmancov_{k-1} +    \enemyonoisecov_k},
\quad   
                        \kg_k = (1 + \kalmancov_{k-1}^{-1})^{-1}
                        \label{eq:step1}\\
 \enemyonoisecov_k &=  \frac{(1 - \kg_k)^2}{\kg_k^2} \, \anoisecov+ \frac{\anoisecov}{\kg_k^2} + 1,
                       \nonumber
  \end{align}
 Note:  Even though $\eobs_k$ has $\anoise_k$ and $\anoise_{k-1}$, since the update only happens every two time steps (odd times), they are statistically independent of
  $\anoise_{k+2}$ and $\anoise_{k+1}$ in $\eobs_{k+2}$.

\item At even time instants, using our own observation $\obs_k$,
  \beq
  \begin{split} \hstate_k &= (1 - \kg_k) \hstate_{k-1} + \kg_k \obs_k , \quad \kg_k = \kalmancov_k\\
    \kalmancov_k^{-1}  &= \kalmancov_{k-1}^{-1} + 1
  \end{split} \label{eq:step2}
  \eeq
\end{compactitem}
To analyze the convergence rate, consider  the precision matrix
$\precision_k = \kalmancov_k^{-1}$.
The combined covariance update of (\ref{eq:step1}),  (\ref{eq:step2})  is
\begin{multline}
\precision_{k+2} 
  = \precision_{k}+ 1 \\
  +\frac{1}{1 + \anoisecov\, (2+\precision_{k})^2 \big[ 1 + (1 - (\precision_k+2)^{-1})^2\big]}\label{eq:combined}
     \end{multline}
Thus asymptotically $\precision_k = O(k)$ and
$\kalmancov_k = O(k^{-1})$. However, the initial behavior of the covariance exhibits slower decrease due to the last term in (\ref{eq:combined}).
It is straightforward (but tedious)  to show that for large $k$, $\kalmancov_k = 2/k $.

\section*{Proof of Theorem  \ref{thm:snr}}
Let $\belieftpi_k$ and $\belieftpii_k$,  denote the posteriors computed using Bayesian filter (\ref{eq:belief}) with
transition matrices $\tpi$ and $\tpii$ respectively, initialized  with common prior $\belief_0$.
Under (A1),  it follows from \cite[Theorem 10.6.1]{Kri16} that copositive dominance $\tpi \lR \tpii$ implies  the following sample path likelihood ratio dominance:
$\belieftpi_k \lr \belieftpii_k$ for all time $k$ where $\lr$ denotes monotone likelihood  ratio dominance.  Since likelihood ratio dominance implies first order stochastic dominance \cite{MS02},   it follows that under (A2) that $\levels^\p \belieftpi_k \leq
\levels^\p \belieftpii_k$.  From  (A2) $\fun(\cdot)$ is non-negative and increasing, and so
$\laction^{\tpi}_k \leq \laction_k^{\tpii}$ implying $\E\{(\laction^{\tpi}_k)^2\} \leq
\E\{(\laction^{\tpii}_k)^2\}$.
              
\bibliographystyle{IEEEtran}

\bibliography{C:/Users/vikramk/styles/bib/vkm}

\end{document}